\title{Minimum Sum Coloring with Bundles in Trees and Bipartite Graphs} %TODO Please add
\author{Takehiro Ito}{Graduate School of Information Sciences, Tohoku University, Japan}{takehiro@tohoku.ac.jp}{https://orcid.org/0000-0002-9912-6898}{JSPS KAKENHI Grant Numbers JP24H00686, JP24H00690}
\author{Naonori Kakimura}{Faculty of Science and Technology, Keio University, Japan}{kakimura@math.keio.ac.jp}{https://orcid.org/0000-0002-3918-3479}{JSPS KAKENHI Grant Numbers 21H03397, 23K21646, JP22H05001}
\author{Naoyuki Kamiyama}{Institute of Mathematics for Industry, Kyushu University, Japan}{kamiyama@imi.kyushu-u.ac.jp}{https://orcid.org/0000-0002-7712-2730}{JSPS KAKENHI Grant Number JP24K14825}
\author{Yusuke Kobayashi}{Research Institute for Mathematical Sciences, Kyoto University, Japan}{yusuke@kurims.kyoto-u.ac.jp}{https://orcid.org/0000-0001-9478-7307}{JSPS KAKENHI Grant Numbers JP22H05001,  24K02901}
\author{Yoshio Okamoto}{Graduate School of Informatics and Engineering, The University of Electro-Communications, Japan}{okamotoy@uec.ac.jp}{https://orcid.org/0000-0002-9826-7074}{JSPS KAKENHI Grant Number JP23K10982}
\authorrunning{T. Ito, N. Kakimura, N. Kamiyama, Y. Kobayashi, and Y. Okamoto} %TODO mandatory. First: Use abbreviated first/middle names. Second (only in severe cases): Use first author plus 'et al.'
\keywords{graph algorithms, minimum sum coloring, minimum coloring, fixed-parameter tractability, NP-hardness} %TODO mandatory; please add comma-separated list of keywords
\newcommand{\cost}{\mathop{\mathrm{cost}}}
\newcommand{\mscwb}{\textsc{Minimum Sum Coloring with Bundles}\xspace}
\newcommand{\tw}{\mathop{\mathsf{tw}}}
\begin{document}

%\maketitle

\maketitle

%TODO mandatory: add short abstract of the document
\begin{abstract}
The minimum sum coloring problem with bundles was introduced by Darbouy and Friggstad (SWAT 2024) as a common generalization of the minimum coloring problem and the minimum sum coloring problem.
During their presentation, the following open problem was raised: whether the minimum sum coloring problem with bundles could be solved in polynomial time for trees.
We answer their question in the negative by proving that the minimum sum coloring problem with bundles is NP-hard even for paths.
We complement this hardness by providing algorithms of the following types.
First, we provide a fixed-parameter algorithm for trees when the number of bundles is a parameter; this can be extended to graphs of bounded treewidth.
Second, we provide a polynomial-time algorithm for trees when bundles form a partition of the vertex set and the difference between the number of vertices and the number of bundles is constant.
Third, we provide a polynomial-time algorithm for trees when bundles form a partition of the vertex set and each bundle induces a connected subgraph.
We further show that for bipartite graphs, the problem with weights is NP-hard even when the number of bundles is at least three, but is polynomial-time solvable when the number of bundles is at most two.
The threshold shifts to three versus four for the problem without weights.
\end{abstract}

%\clearpage 

\section{Introduction}

The minimum sum coloring problem with bundles was introduced by Darbouy and Friggstad~\cite{DBLP:conf/swat/DarbouyF24} as a common generalization of the minimum coloring problem and the minimum sum coloring problem.
For the minimum sum coloring problem with bundles, we are given an undirected graph $G$ and a family of vertex subsets of $G$, called bundles.
Then, we want to find a (proper) coloring of $G$ by positive integers such that the sum of the maximum colors in bundles is minimized (a precise definition will be given in the next section).
Figure~\ref{fig:mscwb_example} (Left) shows an example.
The minimum coloring problem corresponds to the case where the vertex set itself is a unique bundle; the minimum sum coloring problem corresponds to the case where each vertex forms a singleton bundle.

\begin{figure}[t]
    \centering
    \includegraphics{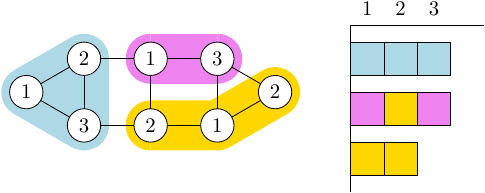}
    \caption{(Left) An instance of the minimum sum coloring problem with bundles. Bundles are distinguished by color. In this example, bundles are disjoint, but they do not have to be disjoint in general. Numbers on vertices represent the colors assigned to them. The maximum color is $3$ for the violet bundle, $3$ for the blue bundle, and $2$ for the orange bundle; their sum is $8$. (Right) Scheduling with conflicts. This is a Gantt chart, where each unit-time job corresponds to a square and the numbers correspond to time slots. Colors show bundles. The graph on the left is a conflict graph over the jobs, and the minimum sum coloring problem with bundles corresponds to minimizing the sum of makespans over bundles.}
    \label{fig:mscwb_example}
\end{figure}

The minimum sum coloring problem with bundles can be seen as a combinatorial model for the following scheduling problem.
We want to process $n$ jobs with unit processing time.
Among them, some pairs of jobs cannot be processed concurrently due to conflicts such as competing resources or precedence constraints.
Those conflicts are modeled by an undirected graph in which the vertices represent given jobs, and two vertices are joined by an edge if and only if the corresponding jobs are in conflict.
Furthermore, there are $\ell$ agents, and each of them is interested in the completion of some set of jobs.
Each of these sets of jobs is modeled as a bundle, and the agents want to minimize the completion time of the last job in each of their bundles.
We assume that there are sufficiently many machines with identical power.
Then, an optimal solution to the minimum sum coloring problem with bundles corresponds to a scheduling of jobs that minimizes the total (or average) completion time of the last jobs in the bundles.
Figure~\ref{fig:mscwb_example} (Right) shows the correspondence.
With this application in mind, it is natural to assume that the bundles form a partition of the vertex set.
However, some of our results can be applied even when the bundles do not form a partition.

The minimum sum coloring problem with bundles is a computationally hard problem.
This is expected since the minimum coloring problem is already NP-hard~\cite{DBLP:conf/coco/Karp72}.
However, it is known that the minimum coloring problem and the minimum sum coloring problem can be solved in linear time for trees~\cite{DBLP:conf/acm/KubickaS89}.
Darbouy and Friggstad~\cite{DBLP:conf/swat/DarbouyF24}, in their presentation at SWAT 2024, asked whether the minimum sum coloring problem with bundles can be solved in polynomial time for trees.

The first main contribution of this paper is to prove that the minimum sum coloring problem with bundles is NP-hard even when the graph is a path and bundles form a partition of the vertex set.
This answers the open question by Darbouy and Friggstad~\cite{DBLP:conf/swat/DarbouyF24} mentioned above since paths are trees.
On the other hand, we already know from the literature that, if the bundles form a partition, the problem is linear-time solvable when the number of bundles is one (corresponding to the minimum coloring problem) or when it is equal to the number of vertices (corresponding to the minimum sum coloring problem).
We also observe that in our reduction, the subgraph induced by a bundle can be disconnected.
Therefore, we wonder what happens if the number of bundles is small or large, or if each bundle induces a connected subgraph.

The second contribution of this paper is to provide a fixed-parameter algorithm for the minimum sum coloring problem with bundles in trees when the number of bundles is a parameter.
The algorithm also works when each bundle is associated with a non-negative weight that acts as a coefficient in the sum.
The algorithm is based on a combinatorial lemma that states that the number of colors in an optimal coloring is bounded by the chromatic number of the graph multiplied by the number of bundles, which holds for general graphs.

The third contribution is to provide a polynomial-time algorithm for the minimum sum coloring problem with bundles in trees when bundles form a partition and the difference between the number of vertices and the number of bundles is small.
It should be noted that this is not a fixed-parameter algorithm with respect to that difference.

The fourth contribution is to provide a polynomial-time algorithm for the minimum sum coloring problem with bundles in trees when the family of bundles forms a partition of the vertex set and each bundle induces a connected subgraph.
Note that this algorithm does not require the number of bundles to be bounded.
We also show that the minimum sum coloring problem with bundles in paths can be solved in polynomial time when each bundle induces a connected subgraph (even if the bundle family does not necessarily form a partition).

Next, we turn our attention to bipartite graphs.
Since for bipartite graphs, the minimum coloring (i.e., when the number of bundles is one) is easy and the minimum sum coloring (i.e., when the number of bundles is equal to the number of vertices) is hard~\cite{DBLP:journals/jal/Bar-NoyK98}, we investigate how the number of bundles affects the computational complexity of the minimum sum coloring problem with bundles.

As the fifth contribution, we prove that for bipartite graphs, the minimum sum coloring problem with bundles can be solved in polynomial time when the number of bundles is at most three, and it is NP-hard when the number of bundles is at least four.
For the weighted case, the problem can be solved in polynomial time when the number of bundles is at most two, and it is NP-hard when the number of bundles is at least three.

Our results are summarized in \Cref{tab:summary}.

%\textbf{\textcolor{red}{TODO}: Add a table for the result summary.}

\begin{table}[t]
    \centering
    \caption{Summary of the results in this paper. Here, $n$ is the number of vertices.}
    \begin{tabular}{llllll}
    \toprule
         &  \# bundles & Type of bundles & Weights & Results & Reference\\
    \midrule
         Trees
         & Unbounded & Independent partition & Uniform & NP-complete & Theorem \ref{thm:hardpath}\\
         & Parameter & General & General & FPT & Theorem \ref{thm:fpt}\\
         & $n-\text{Parameter}$ & Partition & General & XP & Theorem \ref{thm:xp}\\
         & Unbounded & Connected partition & General & P & Theorem \ref{thm:connected_tree}\\
    \midrule
         Bipartite 
         & $\leq 3$ & General & Uniform & P & \Cref{thm:poly-bip-unif-3}\\
         & $\geq 4$ & General & Uniform & NP-complete & \Cref{thm:BipartiteNPhard}\\
         & $\leq 2$ & General & General & P & \Cref{thm:poly-bip-gen-2}\\
         & $\geq 3$ & General & General & NP-complete & \Cref{thm:BipartiteNPhard2}\\
    \bottomrule
    \end{tabular}
    \label{tab:summary}
\end{table}

\subsection*{Related work}

As mentioned in the introduction, the minimum sum coloring problem with bundles was introduced by Darbouy and Friggstad~\cite{DBLP:conf/swat/DarbouyF24} as a common generalization of the minimum coloring problem and the minimum sum coloring problem.

The minimum coloring problem has been one of the central topics in graph theory, graph algorithms, combinatorial optimization, and computational complexity theory.
It is known that the minimum coloring problem is NP-complete~\cite{DBLP:conf/coco/Karp72},\footnote{%
Whenever we talk about NP-completeness of a minimization problem, we consider a canonical decision version of the problem in which we are also given an upper bound $C$ of the optimal value, and want to decide whether the optimal value is at most $C$.%
} even for $4$-regular planar graphs~\cite{DBLP:journals/dm/Dailey80}.
For several classes of graphs, the minimum coloring problem can be solved in polynomial time.
Most notably, for perfect graphs, a polynomial-time algorithm is known that is based on semidefinite programming relaxation~\cite{GLSbook}.
The class of perfect graphs contains bipartite graphs, interval graphs, and chordal graphs, for which the minimum coloring problem can be solved in linear time.

The minimum sum coloring problem was introduced by Kubicka and Schwenk~\cite{DBLP:conf/acm/KubickaS89}, where the NP-completeness of the problem was established.
The problem is also known to be NP-complete even for bipartite graphs~\cite{DBLP:journals/jal/Bar-NoyK98} and interval graphs~\cite{DBLP:journals/siamcomp/Szkaliczki99, DBLP:journals/orl/Marx05} (and hence for chordal graphs).
We recommend a survey by Halld\'orsson and Kortsarz~\cite{DBLP:books/tf/18/HalldorssonK18} on approximability for the minimum sum coloring.

The minimum coloring problem and the minimum sum coloring problem are closely related to scheduling with conflict or mutual exclusion scheduling.
In a basic version of scheduling with conflict, 
we are given a set of $n$ jobs with unit processing time, a set of $m$ machines, and a conflict graph $G$ on the set of jobs.
Then, we need to assign each job to one of the machines and process the jobs along the timeline in such a way that two jobs are not processed at the same time slot if they are adjacent in $G$.
The objective is to minimize either the makespan or the total completion time.
The minimum coloring problem corresponds to the case where $m \ge n$ and the objective is makespan minimization;
the minimum sum coloring problem corresponds to the case where $m \ge n$ and the objective is total completion time minimization.
Baker and Coffman Jr.~\cite{DBLP:journals/tcs/BakerC96} proved that the problem of scheduling with conflict can be solved in polynomial time when $m=2$ and is NP-hard when $m \geq 3$.
For more recent and related results, see a paper by Even et al.~\cite{DBLP:journals/scheduling/EvenHKR09}.
We note that there are versions of scheduling with conflicts where we cannot process two jobs in conflict on the same machine~\cite{DBLP:journals/dam/BodlaenderJW94}.

\section{Preliminaries}

For a positive integer $k \in \mathbb{Z}_{>0}$, we use the notation $[k] = \{1, 2, \dots , k\}$. 

A \emph{graph} $G$ is a pair of its \emph{vertex set} $V(G)$ and its \emph{edge set} $E(G)$.
Each edge $e \in E(G)$ is an unordered pair $\{u,v\}$ of vertices of $G$, where $u$ and $v$ are \emph{endpoints} of the edge $e$.
We follow the basic terminology for graphs
from Korte and Vygen \cite{kortevygen}.
%All graphs in this paper are finite, undirected, and simple (i.e., having no self-loops or multiple edges).
A graph $G=(V, E)$ is \emph{bipartite} if the vertex set $V$ can be partitioned into two disjoint sets $A$ and $B$ in such a way that every edge $e \in E$ has one endpoint in $A$ and the other endpoint in $B$.
A graph $G$ is a \emph{tree} if it is connected and has no cycles.
For $v \in V$, let $\delta_G(v)$ denote the set of all edges incident to $v$. 

Let $G=(V, E)$ be a graph. 
A (proper) \emph{coloring} of $G$ is a map $c\colon V \to \mathbb{Z}_{>0}$ such that $c(u) \neq c(v)$ for every edge $uv \in E$.
For a vertex subset $S \subseteq V$, 
an \emph{$S$-partial coloring} of $G$ is a map $c\colon V \to \mathbb{Z}_{\geq 0}$ such that
$c(v) > 0$ for $v \in S$, $c(v) = 0$ for $v \in V \setminus S$, and 
$c(u) \neq c(v)$ for every edge $uv \in E$ with $u, v \in S$.

The problem \mscwb is defined as follows.

\begin{framed}
\begin{description}
\item[Problem:] \mscwb
\item[Input:] An undirected graph $G=(V, E)$, 
a family $\mathcal{B} = \{B_1, B_2, \dots, B_\ell\}$ of non-empty subsets of $V$,
a weight function $w \colon \mathcal{B} \to \mathbb{Z}_{> 0}$,
and
a positive integer $C \in \mathbb{Z}_{> 0}$.
\item[Question:] Determine whether there exists a (proper) coloring
$c\colon V \to \mathbb{Z}_{>0}$ such that
\[
\cost(c) := \sum_{j=1}^{\ell} w(B_j) \max\{ c(v) \mid v \in B_j\} \leq C.
\]
\end{description}
\end{framed}
For an instance $I=(G, \mathcal{B}, w, C)$ of \mscwb,
we call a coloring $c$ of $G$ \emph{optimal} if $\cost(c) \leq \cost(c')$ for all colorings $c'$ of $G$.

Each member of $\mathcal{B}$ is called a \emph{bundle}.
Throughout the paper, we assume that each vertex $v \in V$ is contained in at least one bundle; otherwise, we can remove $v$ from the instance.  
The family $\mathcal{B}$ is a \emph{partition} if for each vertex $v \in V$ there exists a unique bundle $B_j$ such that $v \in B_j$.
A bundle $B$ is \emph{connected} if the induced subgraph $G[B]$ is connected.
A bundle $B$ is \emph{independent} if the induced subgraph $G[B]$ has no edges.
The family $\mathcal{B}$ is called \emph{connected} (and \emph{independent}) if all its members are connected (and independent, respectively).

The following observation will be used later explicitly or implicitly.
\begin{observation}
\label{obs:maxdeg}
For any graph $G = (V, E)$ in \mscwb, there exists an optimal coloring in which the color of every vertex $v$ is at most $|\delta_G(v)|+1$.
\end{observation}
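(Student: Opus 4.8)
The plan is to start from an arbitrary optimal coloring and repeatedly \emph{push down} any vertex whose color exceeds its degree bound, arguing that each such step preserves both properness and optimality while strictly decreasing a simple potential, so that the process terminates at a coloring with the desired property.

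First I would fix an optimal coloring $c$ and track the potential $\Phi(c) = \sum_{v \in V} c(v)$. Suppose some vertex $v$ violates the bound, i.e.\ $c(v) > |\delta_G(v)| + 1$. Since $v$ has exactly $|\delta_G(v)|$ incident edges, its neighbors occupy at most $|\delta_G(v)|$ distinct colors; hence among the $|\delta_G(v)|+1$ colors in $[\,|\delta_G(v)|+1\,]$ at least one color $a$ is used by no neighbor of $v$. Recolor $v$ to $a$ and call the result $c'$. Because $a$ avoids every neighbor's color, $c'$ is again a proper coloring, and $a \le |\delta_G(v)|+1 < c(v)$, so $c'(v) < c(v)$.

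Next I would verify that this recoloring does not increase the cost. For every bundle $B_j$, if $v \notin B_j$ then $\max\{c(u) \mid u \in B_j\}$ is unchanged, while if $v \in B_j$ then lowering $c(v)$ can only keep or decrease this maximum. Since all weights are positive, it follows that $\cost(c') \le \cost(c)$, so $c'$ remains optimal. Moreover $\Phi(c') = \Phi(c) - (c(v) - a) < \Phi(c)$, a strict decrease of a positive-integer quantity.

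Finally I would iterate: as long as some vertex violates the bound, apply the step above. Each iteration keeps the coloring proper and optimal and strictly lowers $\Phi$, which is bounded below by $|V|$; hence after finitely many steps no vertex violates $c(v) \le |\delta_G(v)|+1$, and the resulting optimal coloring satisfies the claim. This is essentially the greedy-coloring degree bound combined with monotonicity of the bundle cost under lowering a single color. The argument is routine, and the only point genuinely needing care is this monotonicity (so that optimality is preserved), which is immediate because the cost is a weighted sum of pointwise maxima and therefore nondecreasing in each color.
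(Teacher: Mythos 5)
Your proposal is correct and matches the paper's argument: recolor any vertex exceeding the bound with a color missing among its at most $|\delta_G(v)|$ neighbors from $\{1,\dots,|\delta_G(v)|+1\}$, noting this preserves properness and never increases the bundle maxima. The paper states this in one sentence; your potential-function bookkeeping for termination is a harmless elaboration of the same idea.
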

\begin{proof}
    If the color of $v$ exceeds $|\delta_G(v)|+1$, then one of the colors in $\{1,2,\dots,|\delta_G(v)|+1\}$ is missing in the neighborhood of $v$, and $v$ can be recolored with that missing color without increasing the cost.
\end{proof}

\section{Trees}\label{sec:trees}

\subsection{NP-Completeness for Paths}

As an intermediate step, we first prove that \mscwb is NP-complete for perfect matchings.
Here, a \emph{perfect matching} is a set of edges in which no pair of edges shares a common endpoint.

\begin{theorem}
\label{thm:hardmatching}
\mscwb is NP-complete even when 
the edge set of the input graph forms a perfect matching, 
$\mathcal{B}$ is an independent partition of the vertex set, and 
$w(B) = 1$ for every $B \in \mathcal{B}$. 
\end{theorem}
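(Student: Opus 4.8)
The plan is to reduce from \textsc{Vertex Cover}, which is NP-complete; membership in NP is immediate, since given a coloring we can evaluate $\cost(c)$ and compare it with $C$ in polynomial time. The key structural simplification comes from Observation~\ref{obs:maxdeg}: every vertex of a perfect matching has degree $1$, so there is an optimal coloring using only the colors $1$ and $2$. Hence I may assume that each matching edge $\{u,v\}$ receives the colors $1$ and $2$, one on each endpoint, and the only freedom is, for each edge, which endpoint is colored $2$. Since the bundles partition $V$ and are independent, no bundle contains both endpoints of an edge, and the cost of such a $2$-coloring is
\[
\cost(c) = \ell + \bigl|\{\, B_j : B_j \text{ contains a vertex of color } 2 \,\}\bigr|,
\]
because a bundle contributes $2$ if it meets the set of color-$2$ vertices and $1$ otherwise. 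Minimizing the cost is therefore equivalent to choosing, for each edge, a color-$2$ endpoint so that as few bundles as possible contain a color-$2$ vertex.

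Given this, I would build an auxiliary graph $H$ whose vertices are the bundles: for each matching edge with one endpoint in $B_a$ and the other in $B_b$ (necessarily $a \neq b$ by independence), put an edge $\{a,b\}$ in $H$. Selecting a color-$2$ endpoint for a matching edge then amounts to charging the corresponding edge of $H$ to one of its two ends, and the set $S$ of bundles containing a color-$2$ vertex is exactly a vertex cover of $H$: every edge of $H$ is charged to an end in $S$, and conversely any vertex cover can be realized by a suitable choice of color-$2$ endpoints. Consequently the minimum cost equals $\ell + \tau(H)$, where $\tau(H)$ denotes the vertex-cover number of $H$. The reduction reverses this construction: from a \textsc{Vertex Cover} instance $(H,k)$ (we may delete isolated vertices, which affect neither $\tau(H)$ nor the construction), create one bundle per vertex of $H$ and one matching edge per edge of $H$, placing the two endpoints in the bundles corresponding to the edge's ends; set all weights to $1$ and $C = \ell + k$ with $\ell = |V(H)|$. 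Then a coloring with $\cost(c) \leq C$ exists if and only if $\tau(H) \leq k$.

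The step I expect to require the most care is the equality ``minimum cost $= \ell + \tau(H)$,'' and in particular verifying both directions cleanly: that the color-$2$ bundles of any $2$-coloring form a vertex cover, and that every vertex cover is induced by some valid proper $2$-coloring. The reliance on Observation~\ref{obs:maxdeg} to restrict the palette to $\{1,2\}$ is what makes this an exact equality rather than a mere inequality; once that is in place, the remaining verifications that the gadget meets the claimed restrictions (independence of the bundles, the perfect-matching structure, and non-emptiness of the bundles after deleting isolated vertices of $H$) are routine.
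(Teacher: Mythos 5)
Your proposal is correct and is essentially the paper's own reduction: the paper reduces from \textsc{Independent Set} using the identical gadget (one bundle per vertex of the source graph, one degree-one pair per edge, with threshold $C = 2|V|-k$), which is just the complementary restatement of your \textsc{Vertex Cover} formulation with minimum cost $\ell + \tau(H) = 2\ell - \alpha(H)$. Your cost analysis via Observation~\ref{obs:maxdeg} and the cover/independent-set correspondence matches the paper's argument step for step.
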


\begin{proof}
It is easy to see that \mscwb is in NP\@.
To prove NP-hardness, we reduce \textsc{Independent Set} to \mscwb. 
Here, for an undirected graph $G=(V, E)$, we say that $S \subseteq V$ is \emph{independent} if 
$G$ has no edge connecting the vertices in $S$.
\begin{description}
\item[Problem:] \textsc{Independent Set}
\item[Input:] An undirected graph $G=(V, E)$
and
a positive integer $k \in \mathbb{Z}_{> 0}$.
\item[Question:] Determine whether there exists an independent set $S \subseteq V$ with $|S| \geq k$.
\end{description}
It is well-known that \textsc{Independent Set} is NP-complete (see~\cite{DBLP:conf/coco/Karp72}). 

Suppose that we are given an instance of \textsc{Independent Set} that 
consists of a graph $G=(V, E)$ and a positive integer $k$. 
We construct a new graph $G' = (V', E')$ by replacing each vertex $v \in V$ with $|\delta_G(v)|$ new vertices 
so that each new vertex has exactly one incident edge. 
Formally, $G'$ is defined as follows: 
\begin{align*}
V' &= \{p_{(v, e)} \mid v \in V,\ e \in \delta_G(v)\}, & 
E' &= \{ p_{(u, e)}p_{(v, e)} \mid e = uv \in E\}. 
\end{align*}
Let $B_v := \{ p_{(v, e)} \mid e \in \delta_G(v)\}$ for each $v \in V$, and define  
$\mathcal{B} = \{B_v \mid v \in V\}$. 
Then, it is easy to see that $E'$ forms a perfect matching in $G'$ and 
$\mathcal{B}$ is a partition of $V'$. 
Let $w(B) = 1$ for any $B \in \mathcal{B}$, and let $C = 2 |V| - k$. 
This defines an instance 
$(G', \mathcal{B}, w, C)$ of \mscwb. 
See \Cref{fig:reduction_perfmatch1}.

To show the validity of this reduction, it suffices to show that 
$G$ has an independent set of size at least $k$ if and only if 
there exists a coloring $c$ of $G'$ such that 
$\sum_{v \in V} \max \{c (p) \mid p \in B_v \} \le C$.

We first show the sufficiency (``if'' part). 
Suppose that $G'$ has a coloring 
$c \colon V' \to \mathbb{Z}_{> 0}$ such that 
$\cost(c) \le C=2|V|-k$. 
Let $S := \{v \in V \mid  \max \{c (p) \mid p \in B_v \} = 1\}$. 
Then, since 
\[
C \ge \cost(c) = \sum_{v \in V} \max \{c (p) \mid p \in B_v \}
\ge 2(|V| - |S|) + |S|, 
\]
we obtain $|S| \ge k$. 
For any edge $e = uv \in E$, 
since $c(p_{(u, e)}) \neq c(p_{(v, e)})$ as $c$ is a coloring, 
at least one of $c(p_{(u, e)}) \ge 2$ and $c(p_{(v, e)}) \ge 2$ holds, 
which implies that at least one of $u$ and $v$ is in $V \setminus S$. 
Therefore, $S$ is an independent set in $G$, 
which shows the sufficiency.  

We next show the necessity (``only if'' part). 
Suppose that $G$ has an independent set $S \subseteq V$ with $|S| \ge k$. 
Let $c \colon V' \to \mathbb{Z}_{> 0}$ be a coloring of $G'$ such that 
\begin{itemize}
\item
for each edge $e = uv \in E$, 
one of $c(p_{(u, e)})$ and $c(p_{(v, e)})$ is $1$ and 
the other is $2$, and 
\item
for any $v \in S$ and any $e \in \delta_G(v)$, it holds that $c(p_{(v, e)}) = 1$. 
\end{itemize}
Note that such a coloring $c$ exists, because $S$ is an independent set in $G$. 
Since $\max \{c (p) \mid p \in B_v \} = 1$ for $v \in S$ and 
$\max \{c (p) \mid p \in B_v \} \le 2$ for $v \in V \setminus S$, 
we obtain 
\[
\cost(c) = \sum_{v \in V} \max \{c (p) \mid p \in B_v \} \le 2 |V| - |S| \le C, 
\]
which shows the necessity. 

Therefore, the reduction is valid, and hence 
\mscwb is NP-hard even when 
the edge set forms a perfect matching, 
$\mathcal{B}$ is an independent partition of the vertex set, and  
$w(B) = 1$ for every $B \in \mathcal{B}$. 
\end{proof}

\begin{figure}[t]
    \centering
    \includegraphics[scale=0.8]{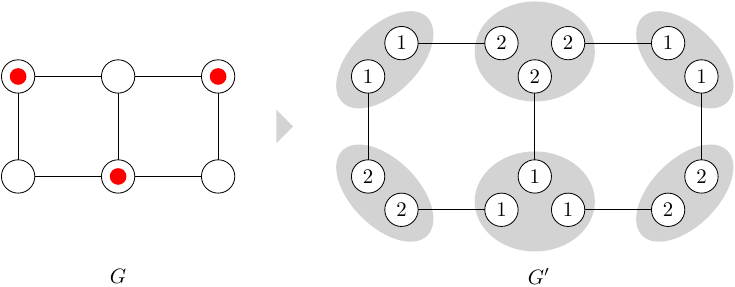}
    \caption{Reduction when a graph forms a perfect matching.}
    \label{fig:reduction_perfmatch1}
\end{figure}

We then prove the NP-hardness of \mscwb on paths by reducing \mscwb with the conditions in Theorem~\ref{thm:hardmatching}.
Roughly speaking, the reduction involves creating four copies of the instance described in Theorem~\ref{thm:hardmatching} and carefully connecting their edges to form a single path. 

\begin{theorem}
\label{thm:hardpath}
\mscwb is NP-complete even when 
$G$ is a path,  
$\mathcal{B}$ is an independent partition of the vertex set, and  
$w(B) = 1$ for every $B \in \mathcal{B}$.  
\end{theorem}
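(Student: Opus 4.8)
The plan is to reduce from the perfect-matching version of \mscwb established in \Cref{thm:hardmatching}. Membership in NP is immediate, so the entire task is to transform a perfect-matching instance $(G', \mathcal{B}, w, C)$ -- a disjoint union of edges, equipped with an independent partition into bundles $B_v$ and all weights equal to $1$ -- into a path instance whose optimum still encodes the same independent set. The only operation available is to splice the disjoint matching edges into a single path by inserting connecting edges; the source of all difficulty is that these connecting edges are adjacency constraints absent from $G'$, and they must be arranged so as not to corrupt the correspondence with independent sets.

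First I would fix the layout. Taking four copies of $G'$ yields $4m$ disjoint matching edges, which I would arrange in a line $x_1 x_2 \cdots x_{8m}$ so that the odd-to-even pairs $\{x_{2i-1}, x_{2i}\}$ are exactly the $4m$ matching edges while the even-to-odd pairs $\{x_{2i}, x_{2i+1}\}$ are the $4m-1$ newly added connecting edges. The bundles are inherited from the four copies and kept as an independent partition; here one must check that the ordering and orientation never place a connecting edge between two copies of the same original vertex, so that independence of every bundle is preserved. I would then set the target $C'$ to a value computed from $C$ and the size of the construction, incorporating a fixed additive term that accounts for the overhead forced by the connecting edges but that does not depend on the chosen independent set.

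The two directions then proceed as follows. For the backward direction, any proper coloring of the path is in particular a proper coloring of the underlying copies of $G'$ with the same cost, since the path has only additional edges; moreover, by \Cref{obs:maxdeg} every vertex has degree at most two and so receives a color in $\{1,2,3\}$. Reading off the set of bundles that attain maximum color $1$ and invoking the matching edges (each of which mirrors an original edge) yields an independent set exactly as in \Cref{thm:hardmatching}, while the cost bound forces this set to be large. For the forward direction, given an independent set I would color every copy of each of its vertices with $1$ and color the remaining vertices so as to respect both the matching and the connecting edges, pushing the unavoidable defects onto colors $2$ or $3$ at prescribed positions.

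The main obstacle is precisely this forward direction, and it is what compels the four-copy construction rather than a single copy. A path admits essentially only the alternating proper $2$-coloring, whose two bipartition classes are fixed once the layout is chosen; but encoding an \emph{arbitrary} independent set demands the freedom to decide, edge by edge, which endpoint of each matching edge receives color $1$, and these local choices are in general globally incompatible with a single alternation. Breaking the alternation requires a color-$3$ vertex, which inflates the maximum of the bundle containing it. The purpose of the four copies together with the careful ordering of the matching edges is to confine all such unavoidable color-$3$ usage to a set of positions whose total contribution is a constant determined by the construction and independent of the chosen independent set, so that it can be folded into $C'$ without breaking the equivalence. Establishing that this overhead is genuinely independent of the independent set, and that no cheaper coloring can exploit color $3$ to manufacture spurious color-$1$ bundles, is the delicate heart of the argument.
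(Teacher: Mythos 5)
There is a genuine gap, and you have correctly located it yourself: the ``delicate heart'' you defer to---showing that the color-$3$ overhead is a constant independent of the chosen independent set---cannot be established for the construction you describe. You splice the matching edges together by adding only new \emph{edges}, so every vertex of the resulting path still lies in one of the original bundles $B_v$. In the forward direction, which endpoint of each matching edge must receive color $1$ is dictated by the independent set $S$ (for $e=uv$ with $u\in S$, the copy of $u$ is forced to color $1$ if $B_u$ is to attain maximum $1$), and this is not known when you fix the ordering and orientation of the matching edges along the path. Consequently, the set of connecting edges whose two endpoints would both want color $1$---and hence the number and the location of the vertices that must be bumped to color $3$ (or have their bundle's maximum raised to $3$)---varies with $S$. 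Since every such bumped vertex sits inside some original bundle, the extra cost is not a fixed additive term that can be folded into $C'$; for some independent sets it inflates many bundles, for others few. No ordering fixes this: already when the source graph is itself a matching, every transversal of the edges is independent, so the left/right placement of the color-$1$ endpoints is adversarial.

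The paper's reduction avoids this by inserting new \emph{vertices} rather than only new edges: a connector vertex $x^i_j$ is placed between consecutive matching edges (and $y^i$ between copies), and all of these connectors are collected into one additional bundle $B_0$. Each connector has degree $2$ and can always be given color $3$ regardless of how its two neighbours are colored in $\{1,2\}$, so the entire overhead is exactly $3$ (the maximum of the single bundle $B_0$), genuinely independent of $S$; this gives $C'=4C+3$. The four copies are then needed not to ``confine color-$3$ usage'' but for the backward direction: dropping the $B_0$ term from a coloring of cost at most $4C+3$ leaves the four copies with total cost at most $4C+3$, so by integrality one copy has cost at most $C$, and its restriction solves the matching instance of \Cref{thm:hardmatching}. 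Your backward direction is otherwise in the right spirit, but without the new connector vertices and the extra bundle the forward direction does not go through, and this is the essential missing idea.
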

\begin{proof}
It is easy to see that \mscwb is in NP\@. 
To prove NP-hardness for the setting in Theorem~\ref{thm:hardpath}, 
we reduce \mscwb with the conditions in Theorem~\ref{thm:hardmatching}. 

Suppose that $(G, \mathcal{B}, w, C)$ is an instance of \mscwb with the conditions in Theorem~\ref{thm:hardmatching}. 
Since the edge set of $G=(V, E)$ forms a perfect matching, 
we denote $V = \{v_1, v_2, \dots , v_{2n-1}, v_{2n}\}$ and 
$E = \{v_1 v_2, v_3 v_4, \dots , v_{2n-1} v_{2n}\}$. 

We create four disjoint copies of this instance. 
For $i \in [4]$, the objects in the $i$-th copy are represented by symbols with the superscript $i$, 
e.g., the $i$-th copy of $G$ is denoted by $G^i$. 
We construct a path $G' = (V', E')$ by introducing new vertices $x^i_j$ and $y^i$ as follows (see \Cref{fig:reduction_path1}):
\begin{align*}
V' &= \left(\bigcup_{i \in [4]} V^i \right) \cup \left\{ x^i_j \mid i \in [4], j\in [n-1]\right\} \cup \left\{y^1, y^2, y^3\right\}, \\
E' &= \left(\bigcup_{i \in [4]} E^i \right) \cup \left\{ v^i_{2j} x^i_j, x^i_j v^i_{2j+1} \mid i \in [4], j\in [n-1]\right\} \cup \left\{ v^i_{2n} y^i, y^i v^{i+1}_{1} \mid i \in [3]\right\}. 
\end{align*}
Let $B_0 = \left\{ x^i_j \mid i \in [4], j\in [n-1]\right\} \cup \left\{y^1, y^2, y^3\right\}$ and 
let $\mathcal{B}' = \left(\bigcup_{i \in [4]} \mathcal{B}^i \right) \cup \{B_0\}$. 
Then, $G'$ is a path and $\mathcal{B}'$ is an independent partition of $V'$. 
Set $w'(B) = 1$ for each bundle $B \in \mathcal{B}'$ and let $C' = 4C + 3$. 
This defines an instance 
$(G', \mathcal{B}', w', C')$ of \mscwb with the conditions in Theorem~\ref{thm:hardpath}. 

To show the validity of this reduction, it suffices to show that 
$(G, \mathcal{B}, w, C)$ is a yes-instance if and only if 
$(G', \mathcal{B}', w', C')$ is a yes-instance. 

We first show the necessity (`only if' part). 
Suppose that $G$ has a coloring 
$c \colon V \to \mathbb{Z}_{> 0}$ such that 
$\sum_{B \in \mathcal{B}} \max  \{c (v) \mid v \in B \} \le C$. 
Since each vertex of $G$ has degree one, 
we may assume that $c(v) \in \{1, 2\}$ for $v \in V$ by Observation~\ref{obs:maxdeg}. 
Define $c' \colon V' \to \mathbb{Z}_{> 0}$ as follows: 
\begin{itemize}
\item
for $v \in V$ and $i \in [4]$, let $c'(v^i) = c(v)$, 
where $v^i \in V^i$ is the $i$-th copy of $v$, and 
\item
let $c'(v') = 3$ for $v' \in \{ x^i_j \mid i \in [4], j\in [n-1]\} \cup \{y^1, y^2, y^3\}$.  
\end{itemize}
Then, $c'$ is a coloring of $G'$ such that 
$\sum_{B' \in \mathcal{B}'} \max \{c' (v') \mid v' \in B' \} \le 4C + 3 = C'$, 
which shows that $(G', \mathcal{B}', w', C')$ is a yes-instance.

We next show the sufficiency (`if' part). 
Suppose that $G'$ has a coloring 
$c' \colon V' \to \mathbb{Z}_{> 0}$ such that 
$\sum_{B' \in \mathcal{B}'} \max \{c' (v') \mid v' \in B' \} \le C'$. 
Since 
\begin{align*}
\sum_{i \in [4]} \sum_{B' \in \mathcal{B}^i} \max \{c' (v') \mid v' \in B' \}
\le 
\sum_{B' \in \mathcal{B}'} \max \{c' (v') \mid v' \in B' \} \le C' = 4C + 3,  
\end{align*}
there exists some $i \in [4]$ such that 
\[
\sum_{B' \in \mathcal{B}^i} \max \{c' (v') \mid v' \in B'\} \le C. 
\]
Therefore, the restriction of $c'$ to $V^i$ is a solution to $(G^i, \mathcal{B}^i, w^i, C^i)$, 
which is the $i$-th copy of $(G, \mathcal{B}, w, C)$. 
This implies that $(G, \mathcal{B}, w, C)$ is a yes-instance.  

Therefore, the reduction is valid, and hence 
\mscwb is NP-hard even when 
the graph is a path, 
$\mathcal{B}$ is an independent partition of the vertex set, and  
$w(B) = 1$ for every $B \in \mathcal{B}$. 
\end{proof}

\begin{figure}[t]
    \centering
    \includegraphics{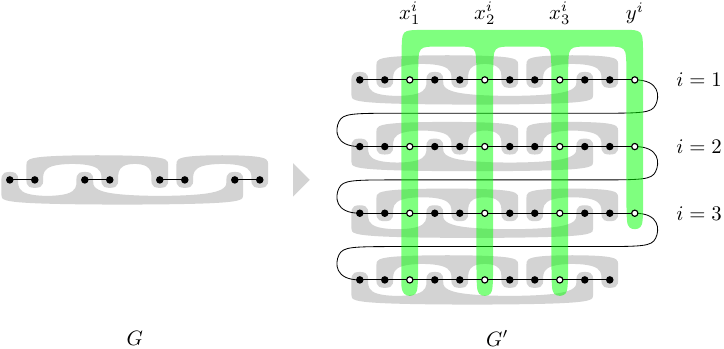}
    \caption{Reduction when a graph is a path.}
    \label{fig:reduction_path1}
\end{figure}

\subsection{Fixed-Parameter Tractability when $|\mathcal{B}|$ is a Parameter}

In the reduction of the previous section, the number of bundles is unbounded.
As a positive algorithmic result that is in contrast to \Cref{thm:hardpath}, we prove that \mscwb in graphs of bounded treewidth can be solved in fixed-parameter tractable time when the number of bundles and the treewidth are parameters.
See Appendix for an introduction to treewidths and Courcelle's theorem that we use for our proof.
%See a textbook (e.g.\ \cite{DBLP:books/sp/CyganFKLMPPS15}) for an introduction to treewidths and Courcelle's theorem.

First, we prove the lemma that bounds the number of colors in any optimal coloring.
For a graph $G=(V,E)$, we denote by $\chi(G)$ the \emph{chromatic number} of $G$, which is defined as the minimum of $\max_{v \in V}c(v)$ over all (proper) colorings $c$ of $G$.

\begin{lemma}
\label{lem:ub_by_chi}
    Let $I=(G=(V,E),\mathcal{B},w,C)$ be an instance of \mscwb such that each vertex is contained in some bundle. 
    Then, every optimal coloring $c\colon V \to \mathbb{Z}_{>0}$ for $I$ satisfies $c(v) \leq \chi(G)|\mathcal{B}|$ for all $v \in V$.
\end{lemma}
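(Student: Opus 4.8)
The plan is to argue by contradiction: I will show that any coloring in which some vertex receives a color exceeding $\chi(G)|\mathcal{B}|$ can be strictly improved, hence cannot be optimal. The starting observation is that, since every vertex lies in some bundle, the color of each vertex is at most the largest of the bundle maxima; writing $a_1 < a_2 < \dots < a_m$ for the \emph{distinct} values taken by the bundle maxima $\max\{c(v)\mid v\in B_j\}$, we have $m \le |\mathcal{B}|$ and $c(v) \le a_m$ for every $v \in V$. Thus it suffices to prove $a_m \le \chi(G)|\mathcal{B}|$, and for this I will show that in an optimal coloring every consecutive gap satisfies $a_k - a_{k-1} \le \chi(G)$ (with the convention $a_0 = 0$); summing these $m \le |\mathcal{B}|$ inequalities then yields $a_m \le \chi(G)\,m \le \chi(G)|\mathcal{B}|$.

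The heart of the argument is a band-by-band recoloring. Fix a proper $\chi(G)$-coloring $\phi\colon V \to [\chi(G)]$, which exists by definition of the chromatic number. To bound the $k$-th gap, suppose for contradiction that $a_k - a_{k-1} > \chi(G)$, and define $c'$ by setting $c'(v) = a_{k-1} + \phi(v)$ whenever $c(v) > a_{k-1}$, and $c'(v) = c(v)$ otherwise. The key point is that this repacks the entire color range above $a_{k-1}$ into the $\chi(G)$ slots $a_{k-1}+1, \dots, a_{k-1}+\chi(G)$ using the proper coloring $\phi$, while leaving everything at or below $a_{k-1}$ untouched.

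I would then verify two things. First, $c'$ is a proper coloring: edges inside the recolored set stay proper because $\phi$ is proper; edges outside are untouched; and an edge between a recolored vertex and an untouched one is fine because the former receives a color $\ge a_{k-1}+1$ while the latter keeps a color $\le a_{k-1}$, so the strict inequality (and hence distinctness) is preserved. Second, $c'$ strictly decreases the cost: since $a_{k-1}$ and $a_k$ are consecutive distinct bundle-maximum values, every bundle has its maximum either $\le a_{k-1}$ or $\ge a_k$; bundles of the first kind are untouched, while every bundle of the second kind now has all its vertices colored $\le a_{k-1}+\chi(G) < a_k$, so its maximum strictly drops. As at least one bundle attains the maximum $a_k$, and all weights are positive, $\cost(c') < \cost(c)$, contradicting optimality.

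I expect the main subtlety to lie precisely between the two naive temptations. The obvious move---shifting every color above some threshold $t$ down by one---fails, because an edge whose endpoints are colored $t-1$ and $t$ would become monochromatic; this is why I recolor an entire band at once using a global $\chi(G)$-coloring rather than shifting by one. The fact that every band-boundary edge runs from a strictly higher color to a strictly lower one is what makes this one-sided recoloring safe, and the choice of the band boundaries as \emph{consecutive} bundle maxima is exactly what guarantees the cost strictly decreases while no smaller-maximum bundle is disturbed.
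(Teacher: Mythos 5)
Your proof is correct, but it takes a genuinely different route from the paper. The paper argues by induction on the number of bundles: assuming some bundle $B$ has maximum exceeding $\chi(G)(\ell+1)$, it deletes the vertices exclusive to $B$, invokes the induction hypothesis to get an optimal coloring of the reduced instance using colors at most $\chi(G)\ell$, and then re-inserts the deleted vertices with a fresh block of $\chi(G)$ colors on top, contradicting optimality. Your argument is non-inductive: you look at the distinct bundle maxima $a_1<\dots<a_m$ of a single optimal coloring and show each consecutive gap is at most $\chi(G)$ by repacking the entire band of colors above $a_{k-1}$ into $\{a_{k-1}+1,\dots,a_{k-1}+\chi(G)\}$ via a fixed $\chi(G)$-coloring $\phi$; the one-sidedness of boundary edges keeps the coloring proper, and the choice of consecutive bundle maxima as band boundaries guarantees every affected bundle's maximum strictly drops while no other bundle is disturbed. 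Both proofs share the same core idea of replacing a stretch of colors by a block of $\chi(G)$ properly assigned colors, but yours avoids the induction and the bookkeeping about vertices exclusive to one bundle, and it yields the slightly sharper bound $c(v)\le \chi(G)\cdot m$, where $m\le|\mathcal{B}|$ is the number of distinct bundle maxima actually realized; the paper's induction, on the other hand, localizes the modification to a single bundle at a time, which makes the cost comparison a one-line appeal to the optimality of the sub-instance. All the steps you flag as needing verification (properness across the band boundary, strict cost decrease from positive weights, the convention $a_0=0$) do go through.
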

\begin{proof}
    The proof proceeds by induction on $|\mathcal{B}|$.
    Consider the case where $|\mathcal{B}| = 1$ and $\mathcal{B} = \{B\}$.
    Note that $B = V$ as each vertex is contained in some bundle. 
    Let $c$ be a minimum coloring, i.e., a coloring that attains $\chi(G)$.
    Then, $\max_{v \in B}c(v) \leq \max_{v \in V}c(v) = \chi(G) = \chi(G)|\mathcal{B}|$.

    Now, let $\ell \geq 1$ be a positive integer, and assume that the lemma holds when the number of bundles is $\ell$.
    Then, we prove that the lemma holds when the number of bundles is $\ell+1$.
    For the sake of contradiction, suppose that there exists an optimal coloring $c^*$ such that $\max_{v \in B}c^*(v) > \chi(G)(\ell+1)$ for some $B \in \mathcal{B}$.
    Fix such a member $B$, and let $U$ be the set of vertices of $G$ that belong to $B$, but not to other bundles. Then, consider the instance $I'=(G-U,\mathcal{B}\setminus \{B\}, w|_{V\setminus U}, C)$, where $w|_{V \setminus U}$ is the restriction of $w$ on $V \setminus U$.
    By the induction hypothesis, there exists an optimal coloring $c'\colon V\setminus U \to \mathbb{Z}_{>0}$ for $I'$ such that $\max_{v \in V \setminus U}c'(v) \leq \chi(G-U)\ell \leq \chi(G)\ell$.
    We now extend $c'$ to a coloring $c$ of $G$ by giving colors in $\{\chi(G)\ell+1, \dots, \chi(G)\ell+\chi(G)\}$ to the vertices of $U$.
    This is possible since $\chi(G[U]) \leq \chi(G)$.
    Then,
    \begin{alignat*}{3}
        \cost(c)
        &= \sum_{B' \in \mathcal{B}\setminus \{B\}}w(B')\max_{v \in B'}c(v) + w(B)\max_{v \in B}c(v)\\
        &\leq \sum_{B' \in \mathcal{B}\setminus\{B\}}w(B')\max_{v \in B'}c'(v) + w(B)\chi(G)(\ell+1)\\
        &< \sum_{B' \in \mathcal{B}\setminus\{B\}}w(B')\max_{v \in B'}c^*(v) +w(B)\max_{v \in B}c^*(v)
         &= \cost(c^*).
    \end{alignat*}
    This contradicts the optimality of $c^*$.
\end{proof}

We now focus on the case where the input graph $G$ has treewidth at most $\tw$.
Note that $\chi(G) \leq \tw+1$, because the degeneracy of $G$ is at most $\tw$. 
Thus, the number of colors in any optimal coloring is at most $(\tw+1)|\mathcal{B}|$ by \Cref{lem:ub_by_chi}.

To solve \mscwb for graphs $G=(V,E)$ with treewidth at most $\tw$, we employ Courcelle's theorem~\cite{DBLP:journals/jal/ArnborgLS91,DBLP:journals/iandc/Courcelle90,DBLP:journals/ita/Courcelle92}.
To this end, we introduce the following auxiliary decision problem.
In addition to the input $(G, \mathcal{B}, w, C)$ to \mscwb, where $\mathcal{B} = \{B_1, B_2, \dots, B_\ell\}$, we also take a sequence $\mathbf{k} = (k_1, k_2, \dots, k_\ell)$ of integers.
Then, we want to decide whether there exists a coloring $c\colon V\to \mathbb{Z}_{>0}$ such that $\max_{v \in B_j}c(v) \leq k_j$ for all $j \in [\ell]$.
If we have a solution to this auxiliary problem for every $\mathbf{k} \in [(\tw+1)\ell]^\ell$, then the optimal cost for the instance $(G, \mathcal{B}, w, C)$ can be derived as
\begin{equation}
\min_{\mathbf{k}}\left\{ \sum_{j=1}^{\ell} w(B_j)k_j \,\middle|\, \text{the answer to the instance } (G,\mathcal{B},w,C,\mathbf{k}) \text{ is yes} \right\}.\label{eq:fpt}
\end{equation}

We now focus on solving the auxiliary problem above.
To solve the auxiliary problem using Courcelle's theorem~\cite{DBLP:journals/jal/ArnborgLS91,DBLP:journals/iandc/Courcelle90,DBLP:journals/ita/Courcelle92}, it is sufficient to write down an MSO formula that expresses the decision.
An instance is given as $(G,\mathcal{B},w,C,\mathbf{k})$, and we want to decide whether there exists a coloring $c$ such that $\max_{v \in B_j}c(v) \leq k_j$ for all $j \in [\ell]$.
By \Cref{lem:ub_by_chi}, we can assume that our coloring $c$ uses colors in $[(\tw+1)\ell]$.
For brevity, let $p=(\tw+1)\ell$.
To express conditions with an MSO formula, we regard a coloring $c$ as a partition $\{C_1,C_2,\dots,C_p\}$ of the vertex set $V$ into independent sets, where $C_i = \{v \in V \mid c(v) = i\}$.
Note that in such a partition a set $C_i$ can be empty.

Then, the decision can be expressed by the following MSO formula:
\begin{align}
    \exists\, C_1,C_2,\dots,C_p \subseteq V\colon&
    \bigwedge_{i=1}^{p}\bigwedge_{i'=i+1}^{p}
    \forall\, v\in V\colon \neg (v \in C_i \wedge v \in C_{i'})\label{eq:mso-disjoint}\\
    &\wedge \forall \ 
 v\in V\colon \bigvee_{i=1}^{p} (v\in C_i)\label{eq:mso-cover}\\
    &\wedge \bigwedge_{i=1}^{p}
    \forall\, u,v \in V\colon \left(u\in C_i \wedge v \in C_i \rightarrow \neg (uv \in E)\right)\label{eq:mso-independent}\\
    &\wedge \bigwedge_{j=1}^{\ell}\forall\, v \in V\colon \left(v \in B_j \rightarrow \bigvee_{i=1}^{k_j} (v\in C_i)\right).\label{eq:mso-bundle}
\end{align}
In the formula, the conjunction of Formulae (\ref{eq:mso-disjoint}) and (\ref{eq:mso-cover}) represents that $\{C_1, C_2, \dots, C_p\}$ is a partition of $V$.
Formula (\ref{eq:mso-independent}) represents that each $C_i$ is independent.
Formula (\ref{eq:mso-bundle}) represents that the maximum color in bundle $B_j$ is at most $k_j$ for each $j \in [\ell]$.
Thus, the formula correctly represents the conditions in the auxiliary problem.

By Courcelle's theorem, this auxiliary problem can be solved in $O(f(\tw,\ell)|V|)$ time for some computable function $f$.
To evaluate the minimum in (\ref{eq:fpt}), we solve the auxiliary problem for every choice of $\mathbf{k} \in [(\tw+1)\ell]^\ell$, thus at most $((\tw+1)\ell)^\ell$ times.
Let $g(\tw)|V|$ be the running time of Bodlaender's algorithm~\cite{DBLP:journals/siamcomp/Bodlaender96} for computing a tree decomposition of width $\tw$.
Then, the overall running time is $g(\tw)|V| + ((\tw+1)\ell)^\ell \cdot O( f(\tw,\ell)|V|)$.
Setting $h(\tw,\ell) = g(\tw) +  ((\tw+1)\ell)^\ell f(\tw,\ell)$, we obtain the following theorem.

\begin{theorem}
\label{thm:fpt}
\mscwb can be solved in $O(h(\tw,\ell)|V|)$ time for some computable function $h\colon \mathbb{Z}_{\geq 0}\times \mathbb{Z}_{\geq 0} \to \mathbb{Z}_{\geq 0}$, where $\tw$ is the treewidth of $G$ and $\ell$ is the number of bundles.
\qed
\end{theorem}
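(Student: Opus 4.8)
The plan is to convert the optimization into a parameter-bounded family of decision problems, encode each in monadic second-order logic, and invoke Courcelle's theorem. First I would fix the size of the color palette. By \Cref{lem:ub_by_chi} every optimal coloring assigns each vertex a color at most $\chi(G)\ell$, and since a graph of treewidth $\tw$ is $\tw$-degenerate we have $\chi(G)\le\tw+1$; hence it suffices to search over colorings using the palette $[(\tw+1)\ell]$. Writing $p=(\tw+1)\ell$, the search space of colorings becomes finite once $\tw$ and $\ell$ are fixed. This step is what enables the whole approach: without a palette bounded purely in terms of the parameters, there is no MSO formula of parameter-bounded size to feed into Courcelle's theorem.

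Next I would decouple the weighted objective from the logical part. For each target vector $\mathbf{k}=(k_1,\dots,k_\ell)\in[p]^\ell$ I would ask whether $G$ admits a coloring with $\max_{v\in B_j}c(v)\le k_j$ for every $j$, and let $D(\mathbf{k})$ denote the answer. The optimal cost then equals
\[
\min\Bigl\{ \textstyle\sum_{j=1}^{\ell} w(B_j)k_j \,\Bigm|\, \mathbf{k}\in[p]^\ell,\ D(\mathbf{k})=\text{yes} \Bigr\},
\]
since an optimal coloring realizes some feasible $\mathbf{k}$ and, conversely, any feasible $\mathbf{k}$ yields a coloring of exactly that cost. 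Confining the weights to this outer minimization is essential, because Courcelle's theorem certifies logical feasibility but cannot itself minimize a weighted sum.

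I would then express $D(\mathbf{k})$ by the MSO formula of \eqref{eq:mso-disjoint}--\eqref{eq:mso-bundle}: existentially quantify over color classes $C_1,\dots,C_p$, assert that they partition $V$, that each $C_i$ is independent, and that every $v\in B_j$ lies in some $C_i$ with $i\le k_j$. The bundles $B_j$ enter as fixed unary relations of the structure. Because $p$ and $\ell$ are bounded by functions of the parameters, the formula has length depending only on $\tw$ and $\ell$, so by Courcelle's theorem~\cite{DBLP:journals/jal/ArnborgLS91,DBLP:journals/iandc/Courcelle90,DBLP:journals/ita/Courcelle92} each $D(\mathbf{k})$ is decidable in $O(f(\tw,\ell)\,|V|)$ time once a width-$\tw$ tree decomposition is at hand.

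Finally I would assemble the running time. I compute a tree decomposition using Bodlaender's algorithm~\cite{DBLP:journals/siamcomp/Bodlaender96} in $g(\tw)|V|$ time, then invoke the MSO solver once for each of the at most $p^\ell=((\tw+1)\ell)^\ell$ vectors $\mathbf{k}$, giving total time $g(\tw)|V|+((\tw+1)\ell)^\ell\cdot O(f(\tw,\ell)|V|)$, which is $O(h(\tw,\ell)|V|)$ for $h(\tw,\ell)=g(\tw)+((\tw+1)\ell)^\ell f(\tw,\ell)$. The main obstacle here is conceptual rather than computational: one must guarantee that the palette is bounded solely in terms of the parameters—secured by \Cref{lem:ub_by_chi}—and that the weighted cost is handled in the enumeration over $\mathbf{k}$ rather than inside the logic, so that the intractable part is reduced to a parameter-sized collection of purely feasibility queries.
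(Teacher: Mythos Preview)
Your proposal is correct and follows essentially the same approach as the paper: bound the palette by \Cref{lem:ub_by_chi} and $\chi(G)\le\tw+1$, enumerate target vectors $\mathbf{k}\in[(\tw+1)\ell]^\ell$, express feasibility for each $\mathbf{k}$ by the MSO formula \eqref{eq:mso-disjoint}--\eqref{eq:mso-bundle}, apply Courcelle's theorem, and precompute the tree decomposition via Bodlaender's algorithm, arriving at the same function $h$. One tiny imprecision: a feasible $\mathbf{k}$ yields a coloring of cost \emph{at most} $\sum_j w(B_j)k_j$, not ``exactly that cost''; the minimum is still correct because an optimal coloring realizes its own $\mathbf{k}^*$ with equality.
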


\subsection{An XP-Algorithm when $|V|-|\mathcal{B}|$ is a Parameter}

We consider the case where $G$ is a tree, the bundle family is a partition of the vertex set, and the number of bundles is large.
More specifically, if $n$ is the number of vertices and $\ell$ is the number of bundles that form a partition of the vertex set, we treat $n-\ell$ as a parameter.

Let $B \in \mathcal{B}$ be a bundle.
We call $B$ a \emph{singleton bundle} if $|B|=1$; otherwise, we call $B$ a \emph{non-singleton bundle}.

\begin{lemma}
\label{lem:nonsingletonbundles}
    Let $I=(G=(V,E),\mathcal{B},w,C)$ be an instance of \mscwb, where $\mathcal{B}$ is a partition of $V$.
    Then, the number of non-singleton bundles in $\mathcal{B}$ is at most $|V|-|\mathcal{B}|$.
\end{lemma}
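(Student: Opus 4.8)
The plan is to use a straightforward double-counting argument on the vertices, exploiting the fact that $\mathcal{B}$ is a partition of $V$. Since every vertex lies in exactly one bundle, the bundle sizes sum to the total number of vertices, i.e.\ $|V| = \sum_{B \in \mathcal{B}} |B|$. The key observation is that this sum can be bounded from below by separating the contribution of singleton bundles from that of non-singleton bundles.

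Concretely, I would let $t$ denote the number of non-singleton bundles, so that the number of singleton bundles is $|\mathcal{B}| - t$. Each singleton bundle contributes exactly $1$ to the sum, while each non-singleton bundle, by definition having at least two vertices, contributes at least $2$. This yields
\[
|V| = \sum_{B \in \mathcal{B}} |B| \geq (|\mathcal{B}| - t)\cdot 1 + t \cdot 2 = |\mathcal{B}| + t.
\]
Rearranging gives $t \leq |V| - |\mathcal{B}|$, which is exactly the claimed bound on the number of non-singleton bundles.

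There is essentially no obstacle here; the only point requiring care is to verify that the disjointness provided by the partition property is genuinely needed, so that vertices are not double-counted across bundles and the equality $|V| = \sum_{B \in \mathcal{B}} |B|$ holds. Once that is in place, the inequality is immediate from the size lower bounds, and the proof concludes directly.
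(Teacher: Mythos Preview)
Your proof is correct and follows essentially the same approach as the paper: letting $t$ be the number of non-singleton bundles, the paper also bounds $|V| \geq 2t + (|\mathcal{B}|-t) = |\mathcal{B}| + t$ using that $\mathcal{B}$ is a partition, and rearranges to conclude $t \leq |V| - |\mathcal{B}|$.
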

\begin{proof}
    Let $t$ be the number of non-singleton bundles.
    Since $\mathcal{B}$ is a partition of $V$, we obtain $|V| \geq 2t + (|\mathcal{B}|-t) = |\mathcal{B}| + t$. Thus, $t \leq |V|-|\mathcal{B}|$.
\end{proof}

Let $I=(G=(V,E),\mathcal{B},w,C)$ be an instance of \mscwb, and assume that
$G$ is a tree, $\mathcal{B}$ is a partition of $V$.
Let $t$ be the number of non-singleton bundles in $\mathcal{B}$.
By \cref{lem:nonsingletonbundles}, $t \leq |V|-|\mathcal{B}| = n-\ell$.
Denote the non-singleton bundles in $\mathcal{B}$ by $B_1, B_2, \dots, B_t$, and
the set of vertices, each of which forms a singleton bundle, by $V_s$.

A primary idea of our algorithm is to fix upper bounds on the colors used in non-singleton bundles and to optimize the colors in singleton bundles with this restriction.
For a non-singleton bundle $B_j$, $j\in [t]$, let $k_j$ be an upper bound on the colors in $B_j$; namely, we seek a coloring $c$ such that $\max_{u \in B_j}c(u) \leq k_j$ for every $j \in [t]$.
For brevity, we denote $\mathbf{k} = (k_1,k_2,\dots, k_t)$.
By Lemma~\ref{lem:ub_by_chi}, it is enough to consider the situations with $k_j \leq 2\ell$ for all $j \in [t]$, and thus the number of possible choices for $\mathbf{k}$ is bounded by $(2\ell)^t$ from above.

We regard $G$ as a rooted tree with a root $r \in V$.
For each vertex $v \in V$, we denote by $V(v)$ the vertex set of the subtree of $G$ rooted at $v$.
For each vertex $v \in V$, each color $k$, and an upper-bound tuple $\mathbf{k}$,  
we define
\begin{gather*}    
f(v,k; \mathbf{k})
:=
\min\left\{ \sum_{u \in V_s} w(\{u\}) c(u)
\,\middle|\,
\begin{array}{l}
c \text{ is a $V(v)$-partial coloring of $G$},\\
c(v) = k,\\
\text{$\max_{u \in B_j}c(u) \leq k_j$ for all $j \in [t]$}\\
\end{array}
\right \}.
\end{gather*}
Then, the optimal value for the instance $I$ is read by
\[
\min_{\mathbf{k},k} \left(f(r,k;\mathbf{k}) + \sum_{j=1}^{t}k_j \right),
\]
where the minimum is taken over all possible choices of $\mathbf{k}=(k_1,k_2,\dots,k_t)$ and $k$.
Note that we only need to consider the value of $k$ in the interval $k \in [2\ell]$ by Lemma \ref{lem:ub_by_chi}.

We now establish a recursive formula for $f(v,k; \mathbf{k})$.
First, consider the case where $v$ is a leaf of $G$.
We have two subcases.
If $v$ forms a singleton bundle by itself, then
\[
f(v,k;\mathbf{k}) = w(\{v\})k.
\]
If $v$ belongs to a non-singleton bundle $B_j$ for some $j \in [t]$, which is unique since $\mathcal{B}$ forms a partition of $V$, then
\[
f(v,k;\mathbf{k})=
\begin{cases}
    0 & \text{if $k \leq k_j$},\\
    +\infty & \text{otherwise}.\\
\end{cases}
\]

Next, consider the case where $v$ is not a leaf of $G$.
Let $v_1, \dots, v_z$ be the children of $v$ in $G$.
We again have two subcases.
If $v$ forms a singleton bundle by itself, then
\begin{align*}
f(v,k;\mathbf{k})
&= w(\{v\})k + \min \left\{\sum_{y=1}^{z} f(v_y, k'_{y}; \mathbf{k})
\,\middle|\,
k'_{y} \neq k \text{ for all } y \in [z]
\right\}\\
&= w(\{v\})k + \sum_{y=1}^{z}\min \left\{ f(v_y, k'_{y}; \mathbf{k})
\,\middle|\,
k'_{y} \neq k
\right\}
\end{align*}
since the values $f(v_y,k'_{y};\mathbf{k})$ are independent from each other.
Similarly, if $v$ belongs to a non-singleton bundle $B_j$ for some $j \in [t]$, then
\[
f(v,k;\mathbf{k})
= 
\begin{cases}
    \displaystyle \sum_{y=1}^{z} \min \left\{f(v_y, k'_{y}; \mathbf{k})
\,\middle|\,
k'_{y} \neq k\right\} & \text{if $k \leq k_j$},\\
    +\infty & \text{otherwise}.
\end{cases}
\]

For each fixed $\mathbf{k}$, we compute $f(v,k;\mathbf{k})$ in a bottom-up manner from leaves to root.
When $v$ is not a leaf, the computation of $f(v,k;\mathbf{k})$ can be done in $O(z\ell)$ time 
for each fixed $k$
since we may suppose that $k'_{y} \leq 2\ell$ by Lemma \ref{lem:ub_by_chi}.
In total, for each fixed $\mathbf{k}$, the computation of $f(v,k;\mathbf{k})$ over all $v$ and $k$ can be done in $O(n\ell^2)$ time.
Hence, for varying $\mathbf{k}$, the running time of our algorithm is $O((2\ell)^t n\ell^2)$.
When $n - \ell$ is a parameter, this is an XP algorithm since $t \leq n-\ell$ by Lemma~\ref{lem:nonsingletonbundles}.

We summarize our findings in the following theorem.
\begin{theorem}
\label{thm:xp}
    \mscwb can be solved in $O((2\ell)^t n \ell^2)$ time when $G$ is a tree, $n$ is the number of vertices, $\ell$ is the number of bundles, and $t$ is the number of non-singleton bundles. 
\end{theorem}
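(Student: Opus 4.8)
The plan is to verify that the dynamic program already constructed above is correct and runs within the stated bound; since the recurrence and the table $f(v,k;\mathbf{k})$ are already in place, the proof reduces to justifying the recurrence, checking that the outer minimization recovers the true optimum, and accounting for the cost.

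The conceptual heart of the argument is the observation that fixing the tuple $\mathbf{k}=(k_1,\dots,k_t)$ of upper bounds converts the global constraints $\max_{u\in B_j}c(u)\le k_j$ into purely local, per-vertex constraints $c(u)\le k_j$ for every $u\in B_j$. This is exactly what makes a subtree dynamic program legitimate even though a non-singleton bundle may be scattered across the tree and need not induce a connected subgraph. I would first prove, by induction on the rooted tree from the leaves upward, that $f(v,k;\mathbf{k})$ equals the claimed minimum of $\sum_{u\in V_s}w(\{u\})c(u)$ over all $V(v)$-partial colorings with $c(v)=k$ obeying these per-vertex bounds. The two base cases (a leaf forming a singleton bundle, and a leaf lying in some $B_j$) are immediate from the definitions. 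For an internal vertex $v$ with children $v_1,\dots,v_z$, the key point is that in a tree the subtrees $V(v_1),\dots,V(v_z)$ are pairwise vertex-disjoint, so once $c(v)=k$ is fixed the only interaction among the children is the properness requirement $c(v_y)\ne k$; consequently the optimal extension decomposes as a sum of independent per-child minimizations $\min_{k'_y\ne k}f(v_y,k'_y;\mathbf{k})$, which is precisely the recurrence.

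Next I would argue that the outer minimization recovers the optimal value. Given any coloring $c$, set $k_j:=\max_{u\in B_j}c(u)$ for each non-singleton bundle; then $c$ is feasible for this tuple $\mathbf{k}$, its singleton contribution is at least $f(r,c(r);\mathbf{k})$, and its non-singleton contribution equals exactly $\sum_j w(B_j)k_j$, so $\cost(c)$ is at least the value of the minimized expression. Conversely, every tuple $\mathbf{k}$ together with a $V$-partial coloring realizing $f(r,k;\mathbf{k})$ yields a genuine coloring whose non-singleton contribution is at most $\sum_j w(B_j)k_j$, since the per-vertex bounds guarantee $\max_{u\in B_j}c(u)\le k_j$; thus the expression never underestimates the cost of a feasible coloring. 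Taking the minimum over all $\mathbf{k}$ and all root colors $k$ therefore returns exactly the optimal cost. Throughout, \Cref{lem:ub_by_chi} with $\chi(G)\le 2$ (as $G$ is a tree) confines every relevant color, hence every $k_j$ and every $k$, to the range $[2\ell]$, bounding the number of tuples $\mathbf{k}$ by $(2\ell)^t$.

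Finally I would assemble the running time. For a fixed $\mathbf{k}$, at an internal vertex with $z$ children each of the $O(\ell)$ entries $f(v,k;\mathbf{k})$ is computed by taking, for every child $v_y$, the minimum of $f(v_y,k'_y;\mathbf{k})$ over the $O(\ell)$ colors $k'_y\ne k$, at cost $O(z\ell)$ per color $k$, hence $O(z\ell^2)$ for the vertex $v$ over all $k$; summing $z$ over all vertices gives $O(n)$, so the whole table for a single $\mathbf{k}$ is filled in $O(n\ell^2)$ time, and multiplying by the $(2\ell)^t$ choices of $\mathbf{k}$ yields the claimed $O((2\ell)^t n\ell^2)$ bound. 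The main obstacle, and the only genuinely substantive step, is the decoupling argument of the third paragraph: one must check carefully that enumerating $\mathbf{k}$ neither skips the optimal coloring (it is captured by its own vector of bundle maxima) nor charges an infeasible cost (the $k_j$ are upper bounds, so charging $w(B_j)k_j$ can only overestimate, which is harmless under a minimum). Once this is secured, correctness and the time bound follow directly, and an appeal to \Cref{lem:nonsingletonbundles} gives $t\le n-\ell$, placing the algorithm in XP with respect to $n-\ell$.
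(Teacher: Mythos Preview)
Your proposal is correct and follows essentially the same approach as the paper: fix the upper-bound tuple $\mathbf{k}$ for the non-singleton bundles, run a standard bottom-up tree DP computing $f(v,k;\mathbf{k})$, and minimize over all $\mathbf{k}\in[2\ell]^t$ and root colors $k$. Your write-up is in fact more careful than the paper's on two points: you explicitly justify both inequalities needed to show the outer minimization equals the optimum, and you write the non-singleton contribution as $\sum_j w(B_j)k_j$ (the paper's displayed formula drops the weights $w(B_j)$, which appears to be a typo given that the surrounding text and \Cref{tab:summary} claim the result for general weights).
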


\subsection{Polynomial-time Algorithm for the Connected Partition Case}

In this subsection, we consider \mscwb in trees when the bundle family is a connected partition, namely, when each bundle induces a connected subgraph and each vertex belongs to a unique bundle.

Let $I=(G=(V,E),\mathcal{B},w,C)$ be an instance of \mscwb, and assume that
$G$ is a tree, $\mathcal{B}$ is a connected partition of $V$.
Define $n \coloneqq |V|$. 
For each vertex $v \in V$, 
we denote by $B(v)$ a unique bundle in $\mathcal{B}$ 
containing $v$. 
Notice that there exists an optimal coloring 
$c$ such that 
$c(v) \in [n]$ for every vertex $v \in V$ by Observation \ref{obs:maxdeg}.

We regard $G$ as a rooted tree 
with a root $r \in V$.
For each vertex $v \in V$, we denote by $V(v)$ the vertex set of the subtree of $G$ rooted at $v$. 
For each vertex $v \in V$, each integer $k \in [n]$, and each integer $g \in [n]$, we define 
\begin{gather*}    
f(v,k,g)
:=
\min\left\{ \sum_{B \in \mathcal{B}} w(B) \max_{u \in B}c(u)
\,\middle|\,
\begin{array}{l}
c \text{ is a $V(v)$-partial coloring of $G$},\\
\text{$c(u) \leq n$ for all $u \in V$,}\\
c(v) = k,\\
\max\{c(u) \mid u \in B(v)\} = g \\ 
\end{array}
\right \}.
\end{gather*}

We now establish a recursive formula for $f(v,k,g)$.
First, consider the case where $v$ is a leaf of $G$.
For $k \in [n]$ and $g \in [n]$, we obtain
\[
f(v,k,g)=
\begin{cases}
    w(B(v)) g& \text{if $k = g$},\\
    +\infty & \text{otherwise}.\\
\end{cases}
\]

Next, consider the case where $v$ is not a leaf of $G$. 
Let $U$ be the set of children of $v$ and let $W := U \cap B(v)$. 
See \Cref{fig:dp_partition1}.
Notice that
$B(u) = B(v)$ for every $u \in W$.  
Furthermore, since $B(v)$ is connected, 
$B(v) \cap V(u) = \emptyset$
for every $u \in U \setminus W$. 
If $k > g$, then we obtain
$f(v,k,g) = + \infty$ since $c(v) \le
\max\{c(u) \mid u \in B(v)\}$. 
Since $\mathcal{B}$ is 
a connected partition of $V$, 
$|\{u \in U \mid B \cap V(u) \neq \emptyset\}| \le 1$
for each bundle $B \in \mathcal{B} \setminus \{B(v)\}$.
Thus, if $k = g$, then 
\[
\begin{split}
f(v,k,g)
= 
w(B(v))g 
& + 
 \sum_{u \in W} 
\min_{k^{\prime} \in [n] \setminus \{k\},g^{\prime} \in [g]}
\Big(f(u, k^{\prime},g^{\prime})-w(B(u))g^{\prime}\Big) \\
& +
\sum_{u \in U \setminus W}
\min_{k^{\prime} \in [n] \setminus \{k\},g^{\prime} \in [n]}
f(u, k^{\prime},g^{\prime}).
\end{split}
\]
Similarly, if $k < g$, then 
\[
\begin{split}
f(v,k,g)
= 
\min_{u^{\prime} \in W}
& \Big\{
\min_{k^{\prime} \in [n] \setminus \{k\}}f(u^{\prime}, k^{\prime},g) \\
& + 
 \sum_{u \in W \setminus \{u^{\prime}\}} 
\min_{k^{\prime} \in [n] \setminus \{k\},g^{\prime} \in [g]}
\Big(f(u, k^{\prime},g^{\prime})-w(B(u))g^{\prime}\Big) \\
& +
\sum_{x \in U \setminus W}
\min_{k^{\prime} \in [n] \setminus \{k\},g^{\prime} \in [n]}
f(u, k^{\prime},g^{\prime})
\Big\}.
\end{split}
\]

\begin{figure}[t]
    \centering
    \includegraphics[scale=0.8]{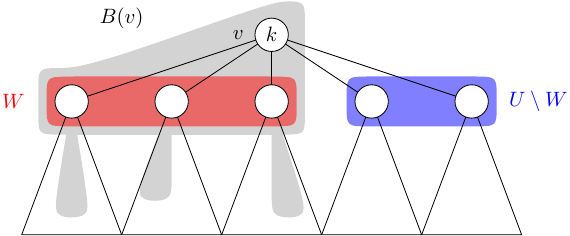}
    \caption{Notation for the case where $\mathcal{B}$ is a partition of $V$.}
    \label{fig:dp_partition1}
\end{figure}

By computing $f(r,k,g)$ over all $k$ and $g$ with these equations, 
we can solve \mscwb in this case. 
For each vertex $v \in V$, we can compute 
$f(v,k,g)$ over all $k$ and $g$ in $O(|\delta_G(v)|^2n^4)$ time. 
Thus, we obtain the following theorem. 
\begin{theorem}
\label{thm:connected_tree}
    \mscwb can be solved in $O(n^6)$ time
    %polynomial time 
    when $G$ is a tree,   
    the subgraph of $G$ induced by each bundle $B \in \mathcal{B}$ is connected, 
    and $\mathcal{B}$ is a partition of $V$. 
\end{theorem}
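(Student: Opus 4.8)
The plan is to solve this case with a bottom-up dynamic program over a rooted version of $G$, in the same spirit as the XP-algorithm behind \Cref{thm:xp}, but using connectivity to avoid carrying a separate color bound for every bundle. Since $G$ is a tree, each vertex $v$ satisfies $|\delta_G(v)|+1 \le n$, so by Observation~\ref{obs:maxdeg} there is an optimal coloring using only colors in $[n]$; I therefore restrict attention to colorings $c$ with $c(v)\in[n]$. Root $G$ at an arbitrary vertex $r$, write $V(v)$ for the vertex set of the subtree at $v$, and let $B(v)$ be the unique bundle containing $v$.

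The structural fact that makes connectivity pay off is the following dichotomy for a child $u$ of $v$: either $u\in B(v)$, and then $B(v)$ continues down into $V(u)$, or $u\notin B(v)$, and then $B(u)$ is connected and avoids $v$, so it is contained entirely within $V(u)$. Hence $B(v)$ is the \emph{only} bundle that can cross the cut separating $V(v)$ from the rest of the tree; every other bundle meeting $V(v)$ lies wholly inside one child subtree. This lets me use the DP state $f(v,k,g)$ that minimizes $\sum_{B}w(B)\max_{u\in B}c(u)$ over all $V(v)$-partial colorings with $c(v)=k$ and with the maximum color inside $B(v)\cap V(v)$ equal to $g$, where the still-open bundle $B(v)$ is charged tentatively as $w(B(v))\,g$. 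Because all contributions other than that of $B(v)$ are already final inside $V(v)$, the only term revised higher up is the one for $B(v)$.

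The recurrence splits on how $g$ is attained. For a leaf, $f(v,k,g)=w(B(v))g$ if $k=g$ and $+\infty$ otherwise. For an internal vertex $v$ with children $U$, put $W=U\cap B(v)$. The state is infeasible ($+\infty$) when $k>g$, since $c(v)\le\max\{c(u)\mid u\in B(v)\}$. When $k=g$, vertex $v$ itself realizes the bundle maximum: I charge $w(B(v))g$ once, combine each child $u\in U\setminus W$ by minimizing $f(u,k',g')$ over $k'\ne k$ and all $g'$ (its bundle is finalized), and combine each child $u\in W$ by minimizing $f(u,k',g')-w(B(u))g'$ over $k'\ne k$ and $g'\in[g]$, the subtraction removing the tentative charge of $B(u)=B(v)$ so that it is not counted twice. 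When $k<g$, the maximum lies strictly below $v$, so exactly one child $u'\in W$ must carry value $g$; I try every candidate $u'\in W$, take $\min_{k'\ne k}f(u',k',g)$ for it, and treat the remaining $W$-children and the $U\setminus W$-children as before. Finally, the open bundle $B(r)$ closes at the root, so the optimum equals $\min_{k,g}f(r,k,g)$.

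Correctness follows by induction on the subtree, the dichotomy above guaranteeing that the cost decomposes over children as stated and that $B(v)$ is the sole bundle straddling the cut. The step I expect to be the main obstacle is the bookkeeping of the charge $w(B(v))$: this weight must be paid exactly once for the whole tree even though $B(v)$ is distributed across $v$ and several child subtrees, and the combination of charging $w(B(v))g$ only at the vertex or child attaining the maximum together with subtracting $w(B(u))g'$ for the other $W$-children is exactly what enforces this. For the running time, computing $f(v,k,g)$ for a fixed pair $(k,g)$ costs $O(|\delta_G(v)|^2 n^2)$ (choosing $u'$ and performing an inner minimization over $(k',g')$), so all $O(n^2)$ states at $v$ take $O(|\delta_G(v)|^2 n^4)$; summing over $v$ and using $\sum_v|\delta_G(v)|^2=O(n^2)$ gives the claimed $O(n^6)$ bound.
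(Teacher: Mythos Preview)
Your proposal is correct and follows essentially the same approach as the paper: the identical DP state $f(v,k,g)$, the same recurrences for the three cases $k>g$, $k=g$, $k<g$ (including the subtraction $w(B(u))g'$ for children in $W$ and the choice of the witness child $u'\in W$ when $k<g$), and the same $O(|\delta_G(v)|^2 n^4)$-per-vertex running time leading to the $O(n^6)$ bound. Your write-up in fact articulates the structural dichotomy and the charging argument for $w(B(v))$ more explicitly than the paper does.
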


\subsection{Algorithm when $G$ is a Path with Connected
Bundles}\label{sec:connected_path}

In this section, we present a polynomial-time algorithm for the case when $G=(V,E)$ is a path and the bundle family is connected but not necessarily a partition.
Assume that 
$V = \{v_1,v_2,\dots,v_n\}$ and 
$E = \{v_i v_{i+1} \mid i \in [n-1]\}$. 
In this case, 
there exists an optimal coloring 
$c$ such that 
$c(v) \in [3]$ for every vertex $v \in V$ by Observation \ref{obs:maxdeg}.

For each integer $i \in [n]$, 
we define $V(v_i) := \{v_{i^{\prime}}\mid i^{\prime} \in [i]\}$. 
For each integer $i \in [n]$, each integer $k \in [3]$, and each pair of integers $p,q \in [n] \cup \{0\}$, we define 
\begin{gather*}    
f(i,k,p,q)
\coloneqq
\min\left\{ \sum_{B \in \mathcal{B}} w(B) \max_{u \in B}c(u)
\,\middle|\,
\begin{array}{l}
c \text{ is a $V(v_i)$-partial coloring of $G$},\\
\text{$c(u) \leq 3$ for all $u \in V$,}\\
c(v_i) = k,\\
\max\{i^{\prime} \in [i] \mid c(v_{i^{\prime}}) = 2\} = p,\\
\max\{i^{\prime} \in [i] \mid c(v_{i^{\prime}}) = 3\} = q
\end{array}
\right \},
\end{gather*}
where the maximum over the empty set is defined as $0$.
For each bundle $B \in \mathcal{B}$, we 
define 
${\sf left}(B) \coloneqq \min\{i \in [n] \mid v_i \in B\}$.
For each integer $i \in [n]$, 
define $\mathcal{B}_i$ 
as the set of bundles 
$B \in \mathcal{B}$ such that 
${\sf left}(B) = i$.

For each integer $i \in [n]$, each 
integer $k \in [3]$, and each pair of 
integers $p,q \in [n] \cup \{0\}$, we can 
compute $f(i,k,p,q)$ as follows. 

If $i = 1$, then 
\[
f(i,k,p,q)=
\begin{cases}
    \sum_{B \in \mathcal{B}_i}w(B) & \text{if $k = 1$, $p = q = 0$},\\
    \sum_{B \in \mathcal{B}_i}2w(B) & \text{if $k = 2$, $p = 1$, $q = 0$},\\
    \sum_{B \in \mathcal{B}_i}3w(B) & \text{if $k = 3$, $p = 0$, $q = 1$},\\
    +\infty & \text{otherwise}.\\
\end{cases}
\]

For $i > 1$, we proceed as follows.
First, we consider the case where 
$k = 1$. 
If $p = i$ or $q = i$, then 
$f(i,1,p,q) = +\infty$. 
Assume that 
$p \neq i$ and $q \neq i$. 
We have 
\begin{equation*}
f(i,1,p,q) =
\min_{k^{\prime} \in \{2,3\}}
f(i-1,k^{\prime},p,q)
+
\sum_{B \in \mathcal{B}_i}w(B).
\end{equation*}

Next, we consider the case where 
$k = 2$. 
If $p \neq i$ or $q = i$, then 
$f(i,2,p,q) = +\infty$. 
Assume that 
$p = i$ and $q \neq i$. 
Let $c^{\prime}$ be a $V(v_{i-1})$-partial coloring achieving 
$f(i-1, k^{\prime},p^{\prime},q)$
for some integers $k^{\prime} \in \{1,3\}$ 
and $p^{\prime} \in [i-1] \cup \{0\}$.
Define $c$ as the $V(v_i)$-partial coloring such that 
$c(v_i) = 2$ and $c(v) = c^{\prime}(v)$ for 
every vertex $v \in V(v_{i-1})$.
Then, for each bundle $B \in \mathcal{B}$ such that 
$v_i \in B$ and $\max\{p^{\prime},q\} < {\sf left}(B) \le i-1$, 
we have 
$\max_{u \in B}c^{\prime}(u) = 1$ and 
$\max_{u \in B}c(u) = 2$.
Thus, we have 
\[
\begin{split}
f(i,2,p,q)
= 
\min_{k^{\prime} \in \{1,3\}, p^{\prime} \in [i-1] \cup \{0\}}
& \Big\{
f(i-1, k^{\prime},p^{\prime},q) \\
& + 
\sum_{B \in \mathcal{B} \colon v_i \in B, \max\{p^{\prime},q\} < {\sf left}(B) \le i-1} 
w(B)\Big\} 
+
\sum_{B \in \mathcal{B}_i}2w(B).
\end{split}
\]

Third, we consider the case where 
$k = 3$. 
If $p = i$ or $q \neq i$, then 
$f(i,3,p,q) = +\infty$. 
Assume that 
$p \neq i$ and $q = i$.
In a way similar to the case where 
$k = 2$, we can see that
\[
\begin{split}
f(i,3,p,q)
= 
\min_{k^{\prime} \in \{1,2\}, q^{\prime} \in [i-1] \cup \{0\}}
& \Big\{
f(i-1, k^{\prime},p,q^{\prime}) \\
& + 
\sum_{B \in \mathcal{B} \colon v_i \in B, q^{\prime} < {\sf left}(B) \le p} 
w(B) \\
& + 
\sum_{B \in \mathcal{B} \colon v_i \in B, \max\{p,q^{\prime}\} < {\sf left}(B) \le i-1} 
2w(B)\Big\}
+
\sum_{B \in \mathcal{B}_i}3w(B).
\end{split}
\]

By computing $f(n,k,p,q)$ over all $k$, $p$, and $q$ with these equations, 
we can solve \mscwb in this case. 
Thus, we obtain the following theorem. 
\begin{theorem}
\label{thm:connected_path}
    \mscwb can be solved in $O(n^3|{\cal B}|)$ time when $G$ is a path and    
    the subgraph of $G$ induced by each bundle $B \in \mathcal{B}$ is connected.
\end{theorem}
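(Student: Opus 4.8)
The plan is to prove the theorem by establishing the correctness and the running time of the dynamic program set up above. Since $G$ is a path, every vertex has degree at most two, so by Observation~\ref{obs:maxdeg} there is an optimal coloring using only the colors $\{1,2,3\}$; this justifies restricting the argument $k$ of $f$ to $[3]$ and bounding all colors by $3$. Moreover, because $G$ is a path, a connected bundle $B$ is necessarily a contiguous interval of vertices starting at $v_{{\sf left}(B)}$, so its running maximum color over the prefix $V(v_i)$ is determined entirely by whether the interval from ${\sf left}(B)$ to $i$ contains a vertex colored $3$, or (failing that) a vertex colored $2$, or neither. The point of recording in the state the most recent positions $p$ and $q$ of a vertex colored $2$ and a vertex colored $3$ is exactly that these two numbers, compared against ${\sf left}(B)$, determine the current maximum color of every still-open bundle, and hence the increment incurred when $v_i$ receives its color.

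The central claim to verify is that $f(i,k,p,q)$ equals the stated minimum, which I would prove by induction on $i$. The base case $i=1$ is immediate from the definition. For the inductive step, I would argue that the recurrences account for the cost of each bundle incrementally and exactly once: a bundle $B$ is charged its baseline $w(B)\cdot c(v_{{\sf left}(B)})$ through the term $\sum_{B \in \mathcal{B}_i} k\,w(B)$ when its leftmost vertex $v_i = v_{{\sf left}(B)}$ is colored, and thereafter, each time a later vertex of $B$ strictly raises its running maximum, the corresponding transition adds exactly $w(B)$ times that increase. Concretely, in the $k=2$ transition the bundles $B$ with $v_i \in B$ and $\max\{p',q\} < {\sf left}(B) \le i-1$ are precisely those whose running maximum was $1$ and now becomes $2$, so each is charged the increment $w(B)$; the $k=3$ transition splits the open bundles into those previously at maximum $2$ (charged $w(B)$) and those previously at maximum $1$ (charged $2w(B)$), according to the comparison of ${\sf left}(B)$ with $p$ and $q'$. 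Summing the telescoping chain of increments along the interval of each bundle yields exactly $w(B)\max_{u\in B}c(u)$, so the total cost is reproduced, and the optimal value of the instance is $\min_{k,p,q} f(n,k,p,q)$.

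The main obstacle I anticipate is the bookkeeping needed to show that this incremental charging is both complete and non-redundant, especially because $\mathcal{B}$ need not be a partition, so intervals may overlap and a single vertex $v_i$ may simultaneously raise the maxima of several bundles. The key observation is that each bundle is handled independently through its own left endpoint and its own interval, so overlaps cause no double counting; one must also check that the restrictions on the ranges of $p'$ and $q'$ in the recurrences correctly preserve the ``most recent $2$/$3$ position'' semantics when $v_i$ is appended (for instance, that coloring $v_i$ with $2$ forces $p=i$ and leaves $q$ unchanged, while the previous state must satisfy $p' \le i-1$). Verifying that every bundle containing $v_i$ is caught by exactly one of the summation conditions across the successive colorings of its vertices is the crux of the argument.

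Finally, for the running time I would count $O(n^3)$ states, since $i \in [n]$, $k \in [3]$, and $p,q \in \{0,1,\dots,n\}$. The $k=1$ transition is a constant-time minimization plus a sum of size $O(|\mathcal{B}_i|)$; the $k=2$ and $k=3$ transitions minimize over $O(n)$ predecessor states, each involving a sum over at most $|\mathcal{B}|$ bundles. Charging the per-state work appropriately (and precomputing the bundle sums where convenient) gives an overall bound of $O(n^3|\mathcal{B}|)$, as claimed.
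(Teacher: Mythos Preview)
Your proposal is correct and follows the same approach as the paper: the dynamic program on states $(i,k,p,q)$, the incremental charging of each bundle at its left endpoint plus subsequent increases of its running maximum, and the $O(n^3|\mathcal{B}|)$ time bound all match the paper exactly. The paper in fact sets up the recurrences but does not spell out the telescoping correctness argument or the careful state counting; your write-up supplies precisely these missing details.
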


\section{Bipartite Graphs}

\subsection{NP-Completeness}
In this section, we first prove that \mscwb is NP-complete even when $G$ is a bipartite graph, the number of bundles is four, and the weights are uniform.
To prove NP-completeness, we use the following problem.

\begin{description}
\item[Problem:] \textsc{Bipartite $3$-List-Coloring}
\item[Input:] A bipartite graph $G=(V, E)$ and a list $L(v)\subseteq \{1,2,3\}$ of available colors for each vertex $v\in V$.
\item[Question:] Determine whether there exists a proper coloring $c\colon V\to \{1,2,3\}$ such that $c(v)\in L(v)$ for each $v\in V$.
\end{description}

As observed by Golovach and Paulusma~\cite{DBLP:journals/dam/GolovachP14},
a reduction given by Jansen and Scheffler~\cite{DBLP:journals/dam/JansenS97} proves the NP-completeness of \textsc{Bipartite $3$-List-Coloring}.

\begin{theorem}\label{thm:BipartiteNPhard}
\mscwb is NP-complete even when $G$ is a bipartite graph, $|\mathcal{B}|=4$, and $w(B)=1$ for every $B\in \mathcal{B}$.
\end{theorem}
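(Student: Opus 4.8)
The plan is to reduce from \textsc{Bipartite $3$-List-Coloring}, which is NP-complete as recalled above. Starting from an instance $(G=(V,E),L)$ with $L(v)\subseteq\{1,2,3\}$, I would construct a bipartite \mscwb instance $(G',\mathcal{B}',w',C')$ with $w'\equiv 1$ and exactly four bundles, interpreting the colors $1,2,3$ of $G'$ as the three list-colors. The guiding principle is that a proper coloring $c'$ of $G'$ of sufficiently small cost should, restricted to the copies of $V$, yield a proper coloring $c$ of $G$ with $c(v)\in L(v)$ for all $v$, and conversely every list-coloring should extend to a cheap coloring of $G'$.

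To encode the lists I would attach \emph{forbidding gadgets}. Forbidding color $3$ at a vertex $v$ (the case $3\notin L(v)$) is an upper-bound constraint and can be realized by pure bundle membership: place such $v$ in a bundle whose maximum is pinned to $2$. Forbidding color $1$ or $2$ at $v$ is realized by attaching to $v$ a pendant whose own color is pinned to the forbidden value, so that the edge between $v$ and the pendant excludes that color at $v$. A pendant pinned to $1$ is placed in an all-$1$ bundle; a pendant pinned to $2$ is made adjacent to an auxiliary pinned-$1$ vertex and capped at $2$, so that properness together with the cap forces color $2$. Since $G'$ is obtained from $G$ by attaching trees at single vertices and by adding separate tree components, bipartiteness is preserved throughout.

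The four bundles would be: $B_{=1}$ collecting all pinned-$1$ vertices (intended maximum $1$); $B_{=2}$ collecting all pinned-$2$ vertices (intended maximum $2$); a bundle $B_{\le 2}$ collecting those copies of original vertices that forbid color $3$ (intended maximum $2$); and a bundle $B_{\mathrm{orig}}$ collecting all copies of $V$ (maximum at most $3$). Every vertex then lies in a bundle, as required. I would choose $C'$ as the cost of the intended honest coloring and argue both directions: from a list-coloring, color the copies of $V$ accordingly and the gadget vertices as pinned, then check properness and that $\cost(c')\le C'$; from a coloring of cost at most $C'$, read off the induced coloring on $V$ and use the enforced bundle maxima to certify that each forbidden color is indeed avoided, hence that every list is respected.

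The hard part will be \emph{rigidity under uniform weights}. Because $G'$ is bipartite, and therefore $2$-colorable, the generic coloring that uses only colors $1$ and $2$ keeps every bundle maximum at most $2$; moreover the true minimum cost is governed by how many bundles can simultaneously be driven to maximum $1$, and a coloring may even use color $3$ on a few vertices precisely in order to free others to take color $1$. Consequently a naive threshold cannot separate the instances: a cheating coloring could lower one bundle's maximum while raising another's, or dishonestly recolor a pinned vertex, with no net penalty, and the intended maximum of $B_{\mathrm{orig}}$ is itself not fixed in advance (it may be $2$ or $3$ depending on whether color $3$ is needed). The crux is therefore to place $C'$ strictly below the cost of every such cheating coloring and to design the gadget--bundle interaction so that the global minimum cost is attained \emph{only} by colorings in which each pinned vertex takes its intended color and each forbidding edge is active. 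I expect to secure the unconditional part of these bounds through edges placed inside bundles (an edge inside a bundle forces its maximum to be at least $2$), and to handle the remaining, conditional forcing by a case analysis showing that any deviation from the honest color profile strictly increases $\cost(c')$ beyond $C'$. Verifying that this analysis is tight, so that the equivalence with \textsc{Bipartite $3$-List-Coloring} holds exactly, is the delicate step on which the whole reduction rests.
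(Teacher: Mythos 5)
Your overall strategy---reducing from \textsc{Bipartite $3$-List-Coloring}, attaching pendant gadgets pinned to colors $1$ and $2$ to forbid those colors at a vertex, and using bundle membership to cap colors from above---is exactly the approach the paper takes. But the step you yourself flag as ``the delicate step on which the whole reduction rests'' is genuinely missing, and with the bundle design you describe it cannot be filled in. Your four bundles have intended maxima $1,2,2,3$, so the honest coloring costs $8$ whenever color $3$ is needed anywhere; yet every bipartite graph admits a proper $2$-coloring, under which each of the four bundles has maximum at most $2$ and the total cost is also at most $8$. Hence no choice of $C'$ separates yes-instances from no-instances: $C'=8$ is achieved by the generic $2$-coloring regardless of whether $G$ is list-colorable, while $C'=7$ may be unattainable even for yes-instances. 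The problem is not merely that cheating colorings must be excluded by a case analysis; it is that your intended cost does not sit strictly below the unconditional upper bound, so there is no slack for any forcing argument to exploit.

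The paper escapes this with a different bundle profile plus a constant-size forcing gadget. Its bundles have intended maxima $(1,2,3,1)$, so the target cost is $7$, strictly below the generic bound of $8$. It then adds $16$ auxiliary vertices: each of the three bundles intended to have small maximum ($B_1$, $B_2$, $B_4$) contains both endpoints of a path of length $3$ whose two inner vertices lie in $B_3$, and there are edges joining a $B_1$-vertex to a $B_2$-vertex and a $B_4$-vertex to a $B_2$-vertex. Now cost at most $7$ forces some bundle to have maximum $1$; that bundle cannot be $B_3$ (it contains an edge), and whichever of $B_1,B_2,B_4$ it is, the corresponding path forces $\gamma_3\ge 3$, after which the remaining budget of $4$ together with the two cross edges pins the profile to exactly $(1,2,3,1)$. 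Only once this rigidity is in hand can one read off that every pinned pendant carries its intended color and every list is respected. To complete your proof you would need to restructure the bundles (e.g., a second bundle of intended maximum $1$ rather than a second one of maximum $2$) and add an analogous forcing gadget; as written, the equivalence direction from a cheap coloring of $G'$ to a list-coloring of $G$ does not go through.
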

\begin{proof}
It is easy to see that the problem is in NP\@.
In what follows, we prove that the problem is NP-hard by reducing the problem \textsc{Bipartite $3$-List-Coloring}.

Suppose that we are given an instance of the problem \textsc{Bipartite $3$-List-Coloring}.
Specifically, let $G=(V, E)$ be a bipartite graph, and let $L(v)\subseteq \{1,2,3\}$ be a list of available colors for each vertex $v\in V$.
For $Z\subseteq \{1,2, 3\}$, we define $U_Z = \{v\in V\mid L(v)=Z\}$.
Then $\{U_Z\mid Z\subseteq \{1,2, 3\}\}$ forms a partition of $V$, that is, $V=\bigcup_{Z\subseteq \{1,2, 3\}}U_Z$ and $U_Z\cap U_{Z'}=\emptyset$ for any distinct subsets $Z, Z'\subseteq\{1,2,3\}$.

We construct an instance $(\tilde{G}, \mathcal{B}, w, C)$ of \mscwb, where $\tilde{G}=(\tilde{V}, \tilde{E})$, as follows.
For $Z\in \{\{2\}, \{2,3\}\}$, we denote by $U^{1}_Z$ a copy of $U_Z$, where $u^{1}\in U^{1}_Z$ denotes a copy of $u\in U_Z$.
Similarly, we define two copies $U^{1}_{\{1,3\}}$ and $U^{2}_{\{1,3\}}$ of $U_{\{1,3\}}$, and three copies $U^{1}_{\{3\}}$, $U^{2}_{\{3\}}$, and $U^{3}_{\{3\}}$ of $U_{\{3\}}$.
Then, the vertex set $\tilde{V}$ of $\tilde{G}$ is defined as
\[
\tilde{V}= V\cup U^{1}_{\{2\}}\cup 
\bigcup_{i=1}^3 U^{i}_{\{3\}}\cup
\bigcup_{i=1}^2
 U^{i}_{\{1,3\}}\cup U^{1}_{\{2,3\}}\cup 
\{v_i \mid i \in [16]\}.
\]
The edge set $\tilde{E}$ is defined as
\[
\tilde{E} = E\cup E_{\{2\}}\cup E_{\{3\}}\cup E_{\{1,3\}}\cup E_{\{2,3\}}\cup E_0,
\]
where we define
\begin{align*}
E_{\{2\}} &= \{ uu^1 \mid u\in U_{\{2\}}\},&
E_{\{3\}} &= \{ uu^2,u^2u^1, uu^3 \mid u\in U_{\{3\}}\},\\
E_{\{1,3\}} &= \{ uu^2, u^2u^1 \mid u\in U_{\{1,3\}}\},&
E_{\{2,3\}} &= \{ uu^1 \mid u\in U_{\{2,3\}}\},\\
E_0 &=\rlap{$\{v_1v_2, v_{11}v_{12}\}\cup \{v_iv_{i+1} \mid i\in\{3,4,5\}\cup \{7,8,9\}\cup \{13,14,15\}\}.$}
\end{align*}
We note that each of $E_{\{2\}}$, $E_{\{3\}}$, $E_{\{1,3\}}$, and $E_{\{2,3\}}$ consists of disjoint edges and paths.
Also, $E_0$ consists of $3$ disjoint paths of length $3$ with two disjoint edges.
Since $\tilde{G}$ is obtained from the bipartite graph $G$ by adding edges of $E_{\{2\}}\cup E_{\{3\}}\cup E_{\{1,3\}}\cup E_{\{2,3\}}\cup E_0$, the constructed graph $\tilde{G}$ is bipartite.
Moreover, we define a family of bundles $\mathcal{B}=\{B_1, \dots ,B_4\}$ as
\begin{align*}
B_1 &=U_{\{1\}}\cup U^1_{\{2\}} \cup U^1_{\{3\}}\cup U^3_{\{3\}}\cup U^1_{\{1,3\}}\cup U^1_{\{2,3\}}\cup\{v_{1}, v_{3},v_{6}\},\\
B_2 &=U_{\{2\}}\cup U_{\{1,2\}} \cup U^2_{\{3\}}\cup U^2_{\{1,3\}}\cup\{v_2, v_{7}, v_{10},v_{12}\},\\
B_3 &=U_{\{3\}}\cup U_{\{1,3\}} \cup U_{\{2,3\}}\cup U_{\{1,2,3\}}\cup\{v_{4}, v_{5}, v_{8}, v_{9}, v_{14}, v_{15}\},\\
B_4 &= \{v_{11}, v_{13},v_{16}\}.
\end{align*}
We set $w(B_i)=1$ for $i\in\{1,2,3,4\}$ and $C=7$.
This defines an instance $(\tilde{G}, \mathcal{B}, w, C)$ of \mscwb.
See \Cref{fig:reduction_bipartite_four1}.
By construction, the size of the obtained instance is bounded by a polynomial in the size of $G$.
For a given proper coloring $\tilde{c}$ of $\tilde{G}$, we denote $\gamma_i(\tilde{c}) = \max_{v\in B_i} \tilde{c}(v)$ for each $i\in\{1,2,3,4\}$.
Then, the cost of $\tilde{c}$ is equal to $\sum_{i=1}^4\gamma_i(\tilde{c})$.

Since $\tilde{G}$ is bipartite, $\tilde{G}$ has a proper coloring $\tilde{c}$ such that $\tilde{c}(v)\in\{1,2\}$ for each $v\in \tilde{V}$.
Its cost is at most $8$.
Below, we prove that the instance $(\tilde{G}, \mathcal{B}, w, C)$ of \mscwb is a yes-instance if and only if the given instance $(G, L)$ of \textsc{Bipartite $3$-List-Coloring} is a yes-instance.

We first prove the necessity (i.e., the ``if'' part).
Suppose that there exists a proper coloring $c$ of $G$ such that $c(v)\in L(v)$ for each $v\in V$.
We define a coloring $\tilde{c}$ of $\tilde{G}$  as
\begin{align*}
\tilde{c}(v) =
\begin{cases}
c(v) & \text{if $v\in V$},\\
1 & \text{if $v\in (B_1\setminus V)\cup\{v_{9}\}\cup B_4$},\\
2 & \text{if $v\in (B_2\setminus V)\cup \{v_{5}, v_{15}\}$},\\
3 & \text{if $v\in \{v_{4}, v_{8}, v_{14}\}$}.
\end{cases}
\end{align*}
Then, it is not difficult to see that $\tilde{c}$ is a proper coloring of $\tilde{G}$.
In fact, since $c$ is a proper coloring, the end vertices of each edge of $E$ have different colors.
Moreover, for each edge in $E_{\{2\}}\cup E_{\{3\}}\cup E_{\{1,3\}}\cup E_{\{2,3\}}$, since the end vertices belong to distinct bundles, they are colored differently. In addition, the $3$ disjoint paths in $E_0$ are properly colored.
The cost is equal to $\gamma_1(\tilde{c}) + \gamma_2(\tilde{c}) + \gamma_3(\tilde{c}) + \gamma_4(\tilde{c}) = 1+2+3+1 = 7$.
Thus, the claim holds.

Next, we prove the sufficiency (i.e., the ``only-if'' part).
Suppose that there exists a proper coloring $\tilde{c}$ of $\tilde{G}$ whose cost is at most $7$.
By the definition of a coloring, $\gamma_i(\tilde{c})\geq 1$ for all $i\in [4]$.
Since the cost is at most $7$, there exists at least one bundle $B_j$ such that $\gamma_j(\tilde{c})=1$.
Since there exists an edge between vertices of $B_3$, it follows that $\gamma_3(\tilde{c})\geq 2$, which implies that $j\neq 3$.

The graph $\tilde{G}$ has a path of length $3$ with edges in $E_0$ such that both of its end vertices belong to $B_j$.
Since $\gamma_j(\tilde{c}) = 1$, we need other $2$ colors for the inner vertices of this path.
Since the inner vertices are contained in $B_3$, it holds that $\gamma_3(\tilde{c})\geq 3$, which implies that  $\gamma_1(\tilde{c})+\gamma_2(\tilde{c})+\gamma_4(\tilde{c}) \leq 7 - \gamma_3(\tilde{c}) \leq 4$.
Since the two edges $v_1v_2$ and $v_{11}v_{12}$ satisfy $v_1\in B_1$, $v_2, v_{12}\in B_2$, and $v_{11}\in B_4$,
we see that $(\gamma_1(\tilde{c}), \gamma_2(\tilde{c}), \gamma_4(\tilde{c}))=(1,2,1)$ is the only possible combination.
Therefore, we obtain  
$(\gamma_1(\tilde{c}), \gamma_2(\tilde{c}), \gamma_3(\tilde{c}),  \gamma_4(\tilde{c}))= (1,2,3,1)$.

We claim that, for any vertex $v\in V$, it holds that $\tilde{c}(v)\in L(v)$, which means that a coloring obtained from $\tilde{c}$ by restricting on $V$ is a solution to the instance $(G, L)$ of \textsc{Bipartite $3$-List-Coloring}.

First, suppose that $v\in B_1\cap V$, i.e., $L(v)=\{1\}$.
Then, since $v\in B_1$, $\tilde{c}(v)\leq \gamma_1(\tilde{c})=1$.
Thus, $\tilde{c}(v)\in L(v)$.
Next, suppose that $v\in B_2\cap V$, that is, $L(v)=\{2\}$ or $L(v)=\{1,2\}$.
In this case, we have $\tilde{c}(v)\leq 2$.
If $L(v)=\{2\}$, then there exists a vertex $u\in U^1_{\{2\}}\subseteq B_1$ such that $uv\in \tilde{E}$.
Since $\tilde{c}(u)=1$, we have $\tilde{c}(v)=2\in L(v)$.
Thus, for any vertex $v\in B_2\cap V$, we have $\tilde{c}(v)\in L(v)$.
Finally, suppose that $v\in B_3\cap V$, that is, $L(v)$ is one of $\{3\}$, $\{1, 3\}$, $\{2, 3\}$, and $\{1, 2, 3\}$.
Then, it holds that $\tilde{c}(v)\leq 3$.
If $L(v)=\{3\}$, then there exist $3$ edges $vv^2$, $v^2v^1$, and $vv^3$ where $v^1, v^3\in B_1$ and $v^2\in B_2$, which implies that $\tilde{c}(v)=3\in L(v)$, as $\tilde{c}(v^1)=\tilde{c}(v^3)=1$ and $\tilde{c}(v^2)\leq 2$.
Similarly, we can argue that $\tilde{c}(v)\in L(v)$ for the other cases.
Therefore, for any vertex $v\in B_3\cap V$, we have $\tilde{c}(v)\in L(v)$.

Thus, the theorem holds.
\end{proof}

For the weighted case, \mscwb on bipartite graphs is NP-complete even when $|\mathcal{B}|=3$.
The proof is an adaptation of that for \Cref{thm:BipartiteNPhard}.

\begin{theorem}\label{thm:BipartiteNPhard2}
\mscwb is NP-complete even when $G$ is a bipartite graph, $|\mathcal{B}|=3$, and $w(B)\in\{1,2\}$ for every $B\in \mathcal{B}$.
\end{theorem}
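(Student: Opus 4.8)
The plan is to reduce from \textsc{Bipartite $3$-List-Coloring} again, reusing the graph $\tilde{G}=(\tilde{V},\tilde{E})$ and the gadget edges $E_0$ from the proof of \Cref{thm:BipartiteNPhard} essentially verbatim. The only change is how the vertices are grouped into bundles: since the intended optimal coloring in \Cref{thm:BipartiteNPhard} assigns maximum color $1$ to both $B_1$ and $B_4$, I would merge these into a single bundle $\hat{B}_1 := B_1 \cup B_4$ and keep $B_2$ and $B_3$ unchanged, so that $\mathcal{B}=\{\hat{B}_1, B_2, B_3\}$ has exactly three members. To compensate for the eliminated bundle I would set $w(\hat{B}_1)=2$ and $w(B_2)=w(B_3)=1$, which lie in $\{1,2\}$ as required, and take the budget $C=7$. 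In the intended solution with maxima $(\gamma_1,\gamma_2,\gamma_3)=(1,2,3)$ the cost is $2\cdot 1 + 1\cdot 2 + 1\cdot 3 = 7$, matching $C$.

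For the forward direction I would reuse the explicit extension $\tilde{c}$ from the proof of \Cref{thm:BipartiteNPhard} without change; since every vertex of $B_4$ is colored $1$ there, the merged bundle $\hat{B}_1$ still has maximum color $1$, and the total cost is again $2+2+3 = 7 \le C$.

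The backward direction is where the argument must be rebuilt. Writing $\gamma_i$ for the maximum color attained in the $i$-th bundle, an edge inside $B_3$ (e.g.\ $v_4 v_5$) gives $\gamma_3 \ge 2$. I would show that every proper coloring of $\tilde{G}$ has cost at least $7$, with equality forcing $(\gamma_1,\gamma_2,\gamma_3)=(1,2,3)$: if $\gamma_1 \ge 2$ then $2\gamma_1 \ge 4$, and a short case check on $\gamma_2$ (using that the $B_2$-path $v_7 v_8 v_9 v_{10}$ forces $\gamma_3 \ge 3$ when $\gamma_2 = 1$) yields cost $\ge 8$; hence $\gamma_1 = 1$. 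Then the length-$3$ path $v_3 v_4 v_5 v_6$, whose endpoints lie in $\hat{B}_1$ and thus receive color $1$, forces its two inner $B_3$-vertices to use colors $2$ and $3$, so $\gamma_3 \ge 3$; finally the edge $v_1 v_2$ with $v_1 \in \hat{B}_1$ and $v_2 \in B_2$ forces $\gamma_2 \ge 2$, and the budget pins $\gamma_2 = 2$ and $\gamma_3 = 3$. Once the maxima are fixed to $(1,2,3)$, the extraction of a valid list coloring proceeds exactly as in \Cref{thm:BipartiteNPhard}, since all neighbors invoked there (the copies $U^1_{\{2\}}\subseteq \hat{B}_1$, the color-$1$ and color-$\le 2$ neighbors of each vertex of $U_{\{3\}}$, and so on) still lie in the same bundles.

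The main obstacle is precisely this forcing step. In \Cref{thm:BipartiteNPhard}, the conclusion $\gamma_2 = 2$ (equivalently $(\gamma_1,\gamma_2,\gamma_4)=(1,2,1)$) relied on having \emph{two} separate bundles $B_1$ and $B_4$, each supplying an independent edge constraint against $B_2$; after the merge only one such constraint survives. The weight $2$ on $\hat{B}_1$ must therefore take over the role of the removed bundle, and the delicate point to verify is that the case $\gamma_1 \ge 2$ is genuinely excluded under $C=7$. This is exactly what rules out the competing pattern $(\gamma_1,\gamma_2,\gamma_3)=(2,1,2)$ and forces $\gamma_3$ up to $3$ through the $B_2$-path.
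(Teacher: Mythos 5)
Your proposal is correct and takes essentially the same approach as the paper: the paper also reduces from \textsc{Bipartite $3$-List-Coloring} via the construction of \Cref{thm:BipartiteNPhard}, assigns weight $2$ to the first bundle and weight $1$ to the other two with $C=7$, the only cosmetic difference being that the paper deletes the $B_4$-gadget vertices $v_{11},\dots,v_{16}$ whereas you keep them and merge $B_4$ into $B_1$ (which is harmless, since the retained gadgets impose only redundant constraints). Your forcing argument for $(\gamma_1,\gamma_2,\gamma_3)=(1,2,3)$ --- in particular using the $B_2$-path $v_7v_8v_9v_{10}$ to exclude $\gamma_1\ge 2$ with $\gamma_2=1$ --- is exactly the step the paper leaves implicit, and it is correct.
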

\begin{proof}
We reduce the problem \textsc{Bipartite $3$-List-Coloring}, where the reduction is similar to the one for Theorem~\ref{thm:BipartiteNPhard}.
Suppose that we are given an instance $(G, L)$ of \textsc{Bipartite $3$-List-Coloring}, where $G=(V, E)$ is a bipartite graph and $L(v)\subseteq \{1,2,3\}$ is a list of available colors for each vertex $v\in V$.
In the proof of Theorem~\ref{thm:BipartiteNPhard}, we constructed an instance $(\tilde{G}, \mathcal{B}, w, C)$ of \mscwb.
We now modify $(\tilde{G}, \mathcal{B}, w, C)$ to obtain an instance $(\tilde{G}', \mathcal{B}', w', C')$, where $\tilde{G}'=(\tilde{V}', \tilde{E}')$ is obtained from $\tilde{G}$ by removing the vertices $v_{11},\dots, v_{16}$.
A family of bundles $\mathcal{B}'=\{B'_1, B'_2 ,B'_3\}$ is defined as
\[
    B'_1 =B_1,\quad
    B'_2 =B_2\setminus \{v_{12}\},\quad
    B'_3 =B_3\setminus \{v_{14}, v_{15}\}.
\]
We set $w'(B'_1)=2$, $w'(B'_2)=w'(B'_3)=1$, and $C'=7$.
This defines an instance $(\tilde{G}', \mathcal{B}', w', C')$ of \mscwb.
Then, the argument analogous to the proof of Theorem~\ref{thm:BipartiteNPhard} works.
\end{proof}

\subsection{Tractable Cases}

To complement Theorems~\ref{thm:BipartiteNPhard} and~\ref{thm:BipartiteNPhard2} concerning bipartite graphs, we prove that \mscwb can be solved in polynomial time under two specific conditions: when $|\mathcal{B}|\leq 2$ and when $|\mathcal{B}|\leq 3$ with uniform weights.

\begin{theorem}
\label{thm:poly-bip-gen-2}
\mscwb can be solved in polynomial time when $G$ is bipartite and $|\mathcal{B}|\leq 2$.
\end{theorem}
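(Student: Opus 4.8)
The plan is to reduce the optimization to a constant number of feasibility tests, each asking whether $G$ admits a proper coloring whose maximum color on $B_1$ and on $B_2$ is bounded by prescribed thresholds. Write $w_i = w(B_i)$ and, for the main case $|\mathcal{B}| = 2$, set $V_{12} = B_1 \cap B_2$, $V_1 = B_1 \setminus B_2$, and $V_2 = B_2 \setminus B_1$; since every vertex lies in some bundle, $V = V_1 \cup V_2 \cup V_{12}$. (The case $|\mathcal{B}| = 1$ is immediate: there $B_1 = V$ and the optimum is $w_1 \chi(G)$, where $\chi(G) \in \{1,2\}$ is determined by testing whether $E = \emptyset$.) For any coloring $c$ write $\gamma_i = \max_{v \in B_i} c(v)$, so that $\cost(c) = w_1\gamma_1 + w_2\gamma_2$. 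Since $G$ is bipartite, $\chi(G) \le 2$, and so by \Cref{lem:ub_by_chi} every optimal coloring uses colors in $[4]$ and hence satisfies $(\gamma_1, \gamma_2) \in [4]^2$. I would therefore iterate over all $16$ pairs $(\gamma_1,\gamma_2) \in [4]^2$, decide for each whether there is a proper coloring with $\max_{v\in B_i} c(v) \le \gamma_i$ for $i \in \{1,2\}$, and return the minimum of $w_1\gamma_1 + w_2\gamma_2$ over the feasible pairs.

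For a fixed target pair, the decision is exactly a prefix-list-coloring problem: each vertex $v$ receives the cap $m_v = \min\{\gamma_i : v \in B_i\}$ (so $m_v = \gamma_1$ on $V_1$, $m_v = \gamma_2$ on $V_2$, and $m_v = \min(\gamma_1,\gamma_2)$ on $V_{12}$), and we ask for a proper coloring with $c(v) \le m_v$ for all $v$. The key observation is that the only binding caps are those equal to $1$: if $m_v \ge 2$ for every $v$, that is, $\gamma_1, \gamma_2 \ge 2$, then the canonical proper $2$-coloring of the bipartite graph $G$ with colors $\{1,2\}$ already respects all caps, so the pair is feasible. It remains to treat pairs with a cap equal to $1$; by symmetry assume $\gamma_1 = 1$. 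Then every vertex of $B_1$ is forced to color $1$, so feasibility requires $B_1$ to be independent. Conversely, given that $B_1$ is independent, the vertices of $V_2$ must be properly colored with colors in $[\gamma_2]$, with those adjacent to $B_1$ additionally forbidden from color $1$. If $\gamma_2 \ge 3$ this is always possible: properly color the bipartite graph $G[V_2]$ with the two colors $\{2,3\}$, avoiding $1$ entirely. If $\gamma_2 = 1$, feasibility forces $V = B_1 \cup V_2$ to be independent, i.e. $E = \emptyset$.

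The remaining and only delicate case is $\gamma_1 = 1$, $\gamma_2 = 2$ (and its mirror image), where $V_2$ must be properly colored with $\{1,2\}$ while every vertex of $V_2$ adjacent to $B_1$ is forced to $2$. This is a $2$-list-coloring instance (all lists are subsets of $\{1,2\}$), equivalently an instance of $2$-SAT, and hence decidable in polynomial time; concretely, one processes each connected component of $G[V_2]$ through its bipartition $(X,Y)$ and checks that the forced-$2$ vertices do not require the two classes to take conflicting colors. I expect this constrained two-coloring to be the main point requiring care, although it is a standard precoloring-extension check. Since there are only $16$ target pairs and each feasibility test reduces to an independence check, a bipartite $2$-coloring, and at most one such precoloring-extension test, the whole procedure runs in polynomial time, which proves the theorem. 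Finally, I would verify monotonicity of feasibility in $(\gamma_1,\gamma_2)$ to confirm that scanning all $16$ pairs indeed attains the optimum.
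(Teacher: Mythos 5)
Your proposal is correct and follows essentially the same route as the paper's proof: both rest on the observation that any coloring beating the trivial $2w_1+2w_2$ bound must color one bundle entirely with color $1$, after which the best achievable maximum color on the other bundle is determined by a precoloring-extension check using the fact that each connected bipartite component has an essentially unique $2$-coloring. Your write-up is in fact more explicit than the paper's about the one delicate subcase ($\gamma_1=1$, $\gamma_2=2$ versus $\gamma_2=3$), which the paper compresses into a single sentence.
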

\begin{proof}
Let $\mathcal{B}=\{B_1, B_2\}$.
We denote $w_i = w(B_i)$ for $i\in\{1,2\}$.
The optimal cost is at least $w_1+ w_2$.
%and moreover,  the optimal value is $w_1+ w_2$ if and only if $G$ is a bipartite graph with partition $B_1$ and $B_2$, which can be checked in polynomial time.
Since $G$ is bipartite, $G$ has a proper coloring with $2$ colors $\{1,2\}$.
Hence, the optimal cost is at most $2w_1+2w_2$.

%We prove that we can check in polynomial time whether there exists a coloring with cost less than $2w_1+2w_2$.
If the optimal cost is less than $2w_1+2w_2$, then $G[B_1]$ or $G[B_2]$ must be colored with color~$1$ only.
%In this case, it is necessary that $B_1$ is a stable set of $G$.
Suppose that $G[B_1]$ is $1$-colorable (i.e., $B_1$ is independent).
Then, by extending the coloring of $G[B_1]$, we can find a coloring $c_1$ of $G[B_2]$ such that $\max_{v\in B_2}c_1(v)$ is minimized since a $2$-coloring of a connected bipartite graph is unique up to permutations of colors.
The cost in this case is $w_1+\gamma_2 w_2$ where $\gamma_2=\max_{v\in B_2}c_1(v)$.
Similarly, if $G[B_2]$ is colored with color $1$, then we can find a coloring $c_2$ of $G$ such that $G[B_2]$ is colored with one color and $\max_{v\in B_2}c_2(v)$ is minimized.
The cost in this case is $\gamma_1 w_1+w_2$ where $\gamma_1=\max_{v\in B_2}c_2(v)$.
Then, we observe that the optimal cost is equal to $\min\{2w_1+2w_2, w_1+\gamma_2 w_2, \gamma_1 w_1+w_2\}$.
Thus, we can find an optimal coloring for a given instance.
\end{proof}

\begin{theorem}
\label{thm:poly-bip-unif-3}
\mscwb with uniform weight can be solved in polynomial time when $G$ is bipartite and $|\mathcal{B}|\leq 3$.
\end{theorem}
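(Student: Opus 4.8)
The plan is to reduce the optimization to checking the feasibility of a constant number of ``prefix-list colorings''. Since $G$ is bipartite it admits a proper coloring with colors $\{1,2\}$, so with uniform weights and three bundles the optimal cost lies between $3$ and $6$; in particular a cost of $6$ is always attainable. Writing $\gamma_j = \max_{v\in B_j}c(v)$, a coloring of cost $S$ yields a tuple $(\gamma_1,\gamma_2,\gamma_3)$ with $\sum_j\gamma_j=S$, and conversely any proper coloring respecting the per-vertex upper bounds $m(v):=\min\{\gamma_j\mid v\in B_j\}$ has cost at most $\sum_j\gamma_j$. Hence the optimal cost equals the smallest $\sum_j\gamma_j$ over all tuples for which such a bounded coloring exists. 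As the optimum is at most $6$, it suffices to test every tuple $(\gamma_1,\gamma_2,\gamma_3)$ of positive integers with $\sum_j\gamma_j\le 5$ (there are only constantly many) and compare the best feasible sum with $6$.

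For each such tuple I would decide whether there is a proper coloring $c$ with $c(v)\le m(v)$ for all $v$, i.e.\ a list coloring with the prefix list $\{1,\dots,m(v)\}$ at each vertex. The key structural observation is that every tuple with $\sum_j\gamma_j\le 5$ has all entries at most $3$, that the value $3$ occurs only in permutations of $(3,1,1)$, and that no such tuple can simultaneously have an entry equal to $2$ and an entry equal to $3$, since that would force $\sum_j\gamma_j\ge 2+3+1=6$. Consequently, in every tuple we need to test, the bounds $m(v)$ take values either only in $\{1,2\}$ or only in $\{1,3\}$, and I would treat these two regimes separately. Let $V_1=\{v\mid m(v)=1\}$, which must receive color $1$.

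In the regime $m(v)\in\{1,3\}$ I claim feasibility holds exactly when $V_1$ is an independent set: given independence, color $V_1$ with $1$ and $2$-color the bipartite graph $G-V_1$ using the two spare colors $\{2,3\}$ (both admissible off $V_1$ since $m(v)=3$ there), which properly colors every edge; conversely an edge inside $V_1$ cannot be monochromatic in color $1$. In the regime $m(v)\in\{1,2\}$ only colors $\{1,2\}$ are available, so any feasible coloring is a proper $2$-coloring with $V_1$ colored $1$; since a $2$-coloring of a connected bipartite graph is unique up to swapping the two colors, feasibility holds exactly when, in each connected component, all vertices of $V_1$ lie on a single side of the bipartition. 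Both conditions are checkable in linear time per component.

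The part requiring the most care is the $(3,1,1)$ regime, where the lists have size $3$ and general list coloring of bipartite graphs is NP-hard---this is precisely \textsc{Bipartite $3$-List-Coloring}, the source of hardness in Theorem~\ref{thm:BipartiteNPhard}. What rescues us is that the lists are prefixes and that the ``$3$''-vertices retain both spare colors $\{2,3\}$: once $V_1$ is fixed to color $1$, the remainder is bipartite and can be freely $2$-colored with $\{2,3\}$, so feasibility collapses to the single condition that $V_1$ be independent. Iterating over the constantly many tuples and returning the coloring attaining the minimum feasible sum (or the $2$-coloring of cost $6$ if no smaller tuple is feasible) then solves the problem in polynomial time; the cases $|\mathcal{B}|\le 2$ are already covered by Theorem~\ref{thm:poly-bip-gen-2}.
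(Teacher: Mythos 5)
Your proposal is correct and follows essentially the same strategy as the paper's proof: since a $2$-coloring bounds the optimum by $6$, one guesses which bundles are forced down to small maximum colors and checks in polynomial time whether the resulting prefix-list coloring is feasible. Your write-up is in fact more detailed than the paper's (which only says ``guess a set of bundles colored with color $1$ only and extend''): the explicit enumeration of tuples $(\gamma_1,\gamma_2,\gamma_3)$ with sum at most $5$, the observation that the bounds $m(v)$ then live in $\{1,2\}$ or $\{1,3\}$ but never mix $2$ and $3$, and the resulting independence/bipartition-side feasibility tests make precise exactly the extension step the paper leaves implicit.
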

\begin{proof}
Let $\mathcal{B}=\{B_1, B_2, B_3\}$.
%Given a coloring $c$, we denote $\gamma_i = \max_{v\in B_i} c(v)$ for $i\in\{1,2,3\}$.
%For each coloring,  $\gamma_i\geq 1$ for each $i\in\{1,2,3\}$.
Since $G$ is bipartite, $G$ has a proper coloring with $2$ colors.
Hence, the optimal cost is at most $2+2+2=6$.

If there exists a coloring with cost less than $6$, then some bundle must be colored with color $1$ only.
We guess a set of bundles that are colored with color $1$ only.
Then, we can extend in polynomial time the coloring to a coloring of $G$ with minimum cost.
Therefore, we can decide whether there exists an optimal coloring if the optimal cost is less than $6$.
Thus, the theorem holds.
\end{proof}

\section{Concluding Remarks}

We have completed a detailed complexity analysis of \mscwb for trees and bipartite graphs, and solved an open problem by Darbouy and Friggstad~\cite{DBLP:conf/swat/DarbouyF24}.
Here, we pose a few open problems.
First, we do not know the existence of an FPT algorithm for trees when $|V|-|\mathcal{B}|$ is a parameter and the bundles form a partition of $V$.
Second, we do not know the complexity for bipartite graphs when the bundles form an independent partition or a connected partition.
Third, we do not know any better approximation ratio for trees and bipartite graphs than the one proposed by Darbouy and Friggstad~\cite{DBLP:conf/swat/DarbouyF24}.

%%
%% Bibliography
%%

%% Please use bibtex, 

% \clearpage 

\bibliography{coloring_with_bundles}

%\clearpage 

\appendix

\section{Treewidths and Courcelle's theorem}

Let $G=(V,E)$ be a graph.
A \emph{tree decomposition} of $G$ is a tree $\mathcal{T}=(\mathcal{V}, \mathcal{E})$ satisfying the following four conditions.
\begin{enumerate}
    \item Each vertex $B \in \mathcal{V}$ is a subset of $V$. To distinguish from a vertex of $G$, a vertex of $\mathcal{T}$ is called a \emph{bag}.
    \item Each vertex $v \in V$ of $G$ is contained in at least one bag $B \in \mathcal{V}$.
    \item For each edge $uv \in E$, there exists at least one bag $B \in \mathcal{V}$ containing both $u$ and $v$.
    \item For each vertex $v \in V$, the bags containing $v$ induce a connected subgraph of $\mathcal{T}$.
\end{enumerate}
The maximum size of a bag minus one is called the \emph{width} of a tree decomposition.
The minimum width of a tree decomposition of $G$ is called the \emph{treewidth} of $G$.
Trees have treewidth one and series-parallel graphs have treewidth two.
It is known that a tree decomposition of $G$ of width $w$ can be found in fixed-parameter tractable time when $w$ is a parameter~\cite{DBLP:journals/siamcomp/Bodlaender96}.

Courcelle's theorem states that every property of graphs $G$ expressible by a formula $\varphi$ in monadic second-order logic can be decided in fixed-parameter-tractable time when the treewidth of $G$ and the length of $\varphi$ are parameters.

In monadic second-order logic of undirected graphs, we can use the usual logical connectives $\neg, \wedge, \vee, \rightarrow, \leftrightarrow, (, )$ and quantifiers $\forall, \exists$.
To form a formula, we are given an undirected graph $G=(V,E)$, and possibly a finite family $\{V_1,V_2,\dots,V_s\}$ of vertices and a finite family $\{E_1,E_2,\dots,E_t\}$ of edges.
Then, the variables are vertices $v \in V$, edges $e \in E$, subsets $X \subseteq V$ of vertices and subsets $Y \subseteq E$ of edges.
The predicates are those in the following forms:
\begin{itemize}
  \item $v \in V_i$ for a vertex variable $v$ and $i \in [s]$;
  \item $e \in E_j$ for an edge variable $e$ and $j \in [t]$;
  \item $v \in X$ for a vertex variable $v$ and a vertex subset variable $X$;
  \item $e \in Y$ for an edge variable $e$ and an edge subset variable $Y$;
  \item $v \in e$ for a vertex variable $v$ and an edge variable $e$;
  \item $uv \in E$ for vertex variables $u$ and $v$.
\end{itemize}
The quantification by $\forall$ and $\exists$ is allowed over vertex variables, edge variables, vertex subset variables and edge subset variables.\footnote{%
More precisely, when we do not allow quantification over edge subset variables, the logic is called MSO${}_1$. On the contrary, when we allow quantification over edge subset variables the logic is called MSO${}_2$. Our definition here refers to MSO${}_2$, but in the proof of our theorem, we only need MSO${}_1$.
}

Then, Courcelle's theorem can be stated as follows.
\begin{theorem}[\cite{DBLP:journals/jal/ArnborgLS91,DBLP:journals/iandc/Courcelle90,DBLP:journals/ita/Courcelle92}]
For a finite undirected graph $G=(V,E)$, a finite family $\mathcal{F}_V = \{V_1,V_2,\dots,V_s\}$ of vertex subsets, a finite family $\mathcal{F}_E = \{E_1,E_2,\dots,E_t\}$ of edge subsets, and a formula $\varphi$ in monadic second-order logic of graphs, given a tree decomposition of $G$ of width at most $w$, we can decide whether the tuple $(G, \mathcal{F}_V, \mathcal{F}_E)$ satisfies the property expressed by $\varphi$ in $O(f(w,|\varphi|)|V|)$ time, where $f\colon \mathbb{Z}_{\geq 0}\times \mathbb{Z}_{\geq 0} \to \mathbb{Z}_{\geq 0}$ is some computable function and $|\varphi|$ is the length of $\varphi$.
\end{theorem}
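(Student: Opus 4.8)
The plan is to prove this via the classical automata-theoretic route, reducing MSO model checking to a membership test for finite tree automata. The starting observation is that a tree decomposition of width at most $w$ can be converted in linear time into a \emph{parse tree} (equivalently, a term over a finite algebraic signature) describing how $G$ is built from single vertices and edges by a bounded repertoire of operations on graphs with at most $w+1$ distinguished \emph{boundary} vertices. Concretely, I would first normalize the given decomposition into a \emph{nice} tree decomposition with $O(|V|)$ bags of size at most $w+1$, where each bag arises from its child(ren) by one of the elementary operations \emph{introduce vertex}, \emph{forget vertex}, \emph{join}, and \emph{add edge}. The crucial point is that, after fixing labels in $[w+1]$ for the boundary vertices, each such operation ranges over a \emph{finite} set depending only on $w$; hence the parse tree is a finite-arity tree over a finite alphabet $\Sigma_w$, has $O(|V|)$ nodes, and is computable in $O(f_0(w)\,|V|)$ time for a computable $f_0$.

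The second step encodes the interpretations of the free set variables into the labeling, turning model checking into a membership question for labeled trees. I would treat the families $\mathcal{F}_V=\{V_1,\dots,V_s\}$ and $\mathcal{F}_E=\{E_1,\dots,E_t\}$ as interpretations of free second-order variables, and expand the alphabet from $\Sigma_w$ to $\Sigma_w\times 2^{[s]}\times 2^{[t]}$ by recording, at the node that introduces a vertex or an edge, the sets $V_i$ (resp.\ $E_j$) to which it belongs. Each labeled tree over this expanded alphabet then encodes a graph together with a choice of the $V_i$ and $E_j$, and conversely every such choice arises from exactly one labeling, so satisfaction of $\varphi$ becomes a property of the labeled parse tree alone.

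The heart of the argument is the translation of $\varphi$ into a finite \emph{bottom-up tree automaton} $\mathcal{A}_\varphi$ over the expanded alphabet that accepts exactly the encodings of models of $\varphi$; this is the tree analogue of the B\"uchi--Elgot--Trakhtenbrot theorem, due to Doner and to Thatcher--Wright. I would construct $\mathcal{A}_\varphi$ by structural induction on $\varphi$. Atomic predicates (incidence $v\in e$, adjacency $uv\in E$, memberships $v\in X$ and $e\in Y$) are recognized by automata with boundedly many states, since each atom involves only boundary vertices present at the bag where it is witnessed and set membership is read directly from the expanded labels. Boolean connectives correspond to the product construction (for $\wedge,\vee$) and to complementation (for $\neg$), while existential quantification over a first- or second-order variable corresponds to \emph{projection} that forgets the guessed coordinate, i.e.\ a nondeterministic guess of the set followed by determinization. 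Closure of regular tree languages under all of these operations is precisely the Doner--Thatcher--Wright machinery, and correctness of the inductive step rests on a Feferman--Vaught-style composition property: the MSO theory (up to the relevant quantifier rank) of a boundaried graph assembled by one elementary operation is determined by the operation together with the theories of its children, which is what lets the finite-state memory of $\mathcal{A}_\varphi$ suffice.

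Finally, deciding the property reduces to running $\mathcal{A}_\varphi$ bottom-up on the parse tree, which takes time linear in the number of nodes, $O(|V|)$, times a factor depending only on the transition table of $\mathcal{A}_\varphi$, giving $O(f(w,|\varphi|)\,|V|)$ for a computable $f$. The main obstacle, and where all the cost hides, is the translation just described: complementation of tree automata requires determinization, and each nested quantifier-alternation block can square the state count, so $|\mathcal{A}_\varphi|$ grows as a non-elementary tower in $|\varphi|$. This is unavoidable, but it affects only the constant $f(w,|\varphi|)$ and not the linear dependence on $|V|$, so it is consistent with the stated bound. One should also check that $\mathrm{MSO}_2$ (quantification over edge sets, as permitted in the statement) is covered: since $G$ has treewidth at most $w$, its incidence graph has treewidth at most $2w+1$, and $\mathrm{MSO}_2$ on $G$ is exactly $\mathrm{MSO}_1$ on the incidence graph, so the same construction applies after passing to the incidence encoding.
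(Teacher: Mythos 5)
The paper offers no proof of this statement for you to match: it is Courcelle's theorem, placed in the appendix as imported background, with the bracketed citations to Arnborg--Lagergren--Seese and Courcelle standing in for a proof, and it is consumed as a black box in the proof of Theorem~\ref{thm:fpt}. Your proposal therefore does something genuinely different --- it reconstructs the classical automata-theoretic argument from those sources: normalize the decomposition into a parse tree over a finite signature of boundaried-graph operations, expand the alphabet by $2^{[s]}\times 2^{[t]}$ to encode the interpretations of the free variables $V_i$ and $E_j$, compile $\varphi$ into a bottom-up finite tree automaton by structural induction (Doner--Thatcher--Wright closure: products for Boolean connectives, complementation via determinization for negation, projection for existential quantifiers), and run it bottom-up in time linear in $|V|$ with all blow-up, including the unavoidable non-elementary tower from quantifier alternation, confined to $f(w,|\varphi|)$. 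This is sound and is essentially the proof in the cited literature; what the paper's choice buys is brevity (the theorem is classical and orthogonal to the paper's contributions), while yours buys self-containedness. Two details you should tighten. First, your bound of $2w+1$ on the treewidth of the incidence graph is needlessly weak: attaching, for each edge $e=uv$, a new bag $\{u,v,e\}$ below a bag containing both endpoints yields width $\max(w,2)$; this does not matter for correctness, since any computable bound is absorbed into $f$. Second, encoding the memberships in $\mathcal{F}_V$ and $\mathcal{F}_E$ into node labels is well defined only if each vertex and each edge is created at a \emph{unique} node of the tree being fed to the automaton; this holds in the term-algebra (boundaried-graph) formalism you invoke, but fails for naive nice tree decompositions, where the same vertex may be introduced separately in several branches below a join node, so you should either keep the algebraic parse-tree view throughout or attach the membership labels to forget nodes, which are unique per vertex.
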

We note that the sizes of $\mathcal{F}_V$ and $\mathcal{F}_E$ are taken into account in $|\varphi|$.

\end{document}